\newcommand{\blind}{0}
\newcommand{\cmark}{\ding{51}}%
\newcommand{\xmark}{\ding{55}}%
\DeclareMathOperator{\E}{\textsf{E}}
\DeclareMathOperator{\N}{\textsf{N}}
\newcommand{\KLD}[2]{\mathrm{KL} \left( \left. \left. #1 \right|\right| #2 \right) }
\DeclareMathOperator*{\argmin}{arg\,min}
\newtheorem{prop}{Proposition}
\newtheorem*{prop*}{Proposition 1}
\newtheorem*{prop1}{Proposition 1}
\newtheorem*{prop2}{Proposition 2}
\providecommand{\keywords}[1]
{
  \small	
  \textbf{\textit{Keywords---}} #1
}
\begin{document}

\doublespacing

\if0\blind
{
    \title{\bf Bayesian Smoothing and Feature Selection using Variational Automatic Relevance Determination}
  
    \author{Zihe Liu\thanks{Department of Statistics, University of Illinois Urbana-Champaign, Champaign, IL; email: ziheliu2@illinois.edu.} \,,\,
    Diptarka Saha\thanks{Department of Statistics, University of Illinois Urbana-Champaign, Champaign, IL; email: saha12@illinois.edu.} \,\,
    and Feng Liang\thanks{Department of Statistics, University of Illinois Urbana-Champaign, Champaign, IL; email: liangf@illinois.edu.}
    }
    \date{}
} \fi

\if1\blind
{
  \title{\bf Bayesian Smoothing and Feature Selection using Variational Automatic Relevance Determination}
  \date{}
} \fi

\maketitle

\bigskip

\begin{abstract}
This study introduces Variational Automatic Relevance Determination (VARD), a novel approach tailored for fitting sparse additive regression models in high-dimensional settings. VARD distinguishes itself by its ability to independently assess the smoothness of each feature while enabling precise determination of whether a feature's contribution to the response is zero, linear, or nonlinear. Further, an efficient coordinate descent algorithm is introduced to implement VARD. Empirical evaluations on simulated and real-world data underscore VARD's superiority over alternative variable selection methods for additive models.
\end{abstract}
\noindent%
\keywords{Generalized Additive Model, Sparsity, Automatic Relevance Determination, Empirical Bayes, Variational techniques}
\vfill

\newpage

\section{Introduction}
\label{sec:intro}

We consider the problem of simultaneous smoothing and variable selection for the additive model:
\begin{equation} \label{eq:additive}
f(x_1, \dots, x_p) = f_1(x_1) + \cdots + f_p(x_p),
\end{equation}
where each $f_j(x_j) = \sum_k \beta_{jk} h_{jk} (x_j)$ is a one-dimensional non-parametric function defined on feature $x_j$ through potentially non-linear bases. In the regression setup, the problem boils down to determining which $f_j$ functions are exactly zero and controlling the smoothness of $f_j$ s which are deemed to be non-zero.




The conventional approach to controlling roughness in an additive model involves the use of smoothing splines \citep{wahba1990spline}. 
Upon basis expansion, the objective function of smoothing spline is a form of ridge regression \citep{RIDGE} with each $f_j$'s roughness being controlled by a separate ridge smoothness parameter $\lambda_j$ (see detailed form in Appendix \ref{sec:appendix_regularization_review}). However, dealing with so many parameters becomes impractical for manual tuning as the number of features grows. To address this challenge, a series of works \citep{wood2000, Wood2004, wood2008, wood2010, wood2016} have been developed for learning smoothness parameters adaptively. See \cite{woodbook} for a comprehensive understanding of this line of work. 
However, smoothing spline methods face limitations in tasks concerning feature selection due to their Ridge-like objective function.

On the other hand, a popular method to facilitate feature selection in additive models is through the application of a \textit{group Lasso} type objective function \citep{yuan2006model}, which achieves exact sparsity. Prominent methods in this field include COSSO \citep{lin2006}, SPAM \citep{ravikumar2007spam, ravikumar2009}, and GAMSEL \citep{alex2015}. However, these approaches are not ideal for smoothing tasks. This is because they typically employ a single universal roughness penalty parameter for different \( f_j \), limiting their flexibility to accommodate varying levels of smoothness. While it is theoretically feasible to assign distinct penalty parameters to each \( f_j \) to control their smoothness individually, the computational demands of such an approach can become prohibitive. Further exploration of these issues is provided in Section \ref{sec:background}.

The Bayesian method often used to induce sparsity in additive models is the spike and slab approach \citep{he2022bayesian, scheipl2012}. In this method, an auxiliary indicator variable \( Z \) determines whether the slab or spike component dominates, thereby influencing whether the parameter is included in the final model or not. While this strategy potentially enables feature selection by setting a cutoff on the posterior inclusion probability \( P(Z = 1 | \mathcal{D}) \), it does not naturally achieve exact sparsity. This limitation has several drawbacks, as discussed in Section \ref{sec:bayesian methods}. In this paper, our focus is on strategies that achieve exact sparsity.


To summarize, in the literature, we have methods that perform well in either of individual smoothing (e.g. smoothing spline) or feature selection (e.g. COSSO, SPAM, GAMSEL) tasks, but there exists a gap in the literature for a method that can perform both tasks well simultaneously. 

To address this gap, we propose a novel, fully Bayesian, and computationally efficient coordinate descent algorithm that integrates concepts from variational Bayes and Automatic Relevance Detection \citep{mackay1995probable, neal1996bayesian}. This approach allows us to simultaneously manage smoothing and feature selection tasks. Inspired by Automatic Relevance Determination (ARD), a key concept in empirical Bayes methods for sparsity modeling, our method learns smoothness by estimating individual prior variances for coefficients after basis expansion. This method naturally achieves exact sparsity when prior variances are estimated as zero \citep{tipping2003fast, wipf2007new}. Further, similar to Lasso, our algorithm involves just one hyperparameter, which can be easily tuned using cross-validation. Through several experiments on real and synthetic datasets, we demonstrate the superiority of our algorithm over current state-of-the-art methods.

The rest of this paper is organized as follows. In Section 2, we review background knowledge about the additive model to motivate our method. In Section 3, we introduce our framework with automatic relevance determination. In Section 4, we present our coordinate descent algorithm along with other implementation details. In Section 5, we compare our method's performance with other sparsity-inducing methods such as SPAM and GAMSEL, on real and simulated datasets.

\section{Background}
\label{sec:background}
\subsection{Regularization Methods}
\label{sec:regularization_methods}
Consider a dataset with $n$ observations, each consisting of a response $y_i$, and  $p$ predictors $\{{x}_{ij}\}_{j=1}^p$. Regularization in additive models \eqref{eq:additive} is done through minimizing an objective that combines residual sum of squares, 
$\text{RSS}(f) = \sum_{i=1}^n \big [y_i - \sum_{j=1}^p f_j(x_{ij}) \big]^2$
with a penalty term $Penalty(f)$. For our purposes here, the penalty may be expressed abstractly as follows:




\begin{equation}
\label{eq:pen_form}
    Penalty(f) =  \underbrace{\sum_{j=1}^p \lambda_jJ_R(f_j)}_\text{Ridge} + \underbrace{\lambda\sum_{j=1}^pJ_L(f_j)}_\text{Group Lasso}.
\end{equation}


\begin{itemize}
    \item $\sum_{j=1}^p \lambda_j J_R (f_j)$ represents the `ridge' component of the penalty. The function $J_R(f_j)$ quantifies the roughness of each $f_j$ using a quadratic penalty structure. The parameters $\lambda_1, \ldots, \lambda_p$ are non-negative smoothness parameters that independently control the degree of roughness for each $f_j$.

    \item $\lambda\sum_{j=1}^pJ_L(f_j)$ corresponds to the `group Lasso (gLasso)' part of the penalty, encouraging sparsity and sometimes having a global smoothing effect. The function $J_L(f_j)$ imposes a first absolute moment penalty.
    The parameter $\lambda$ is a non-negative tuning parameter which induces and controls sparsity. 
\end{itemize}

Notice that in the penalty structure \eqref{eq:pen_form}, the ridge part has individual hyperparameters $\lambda_1,\cdots, \lambda_p$, while the hyperparameter for the gLasso part is shared. To our knowledge, no existing work assigns individual hyperparameters for Lasso to control smoothness as with ridge. The reason that it's feasible to introduce $p$ hyperparameters for ridge is that the optimal coefficient solution for ridge is in closed form in terms of these $p$ hyperparameters, and subsequently, the cross-validation-based objective can also be represented in closed form in terms of these $p$ hyperparameters (see, e.g., \cite{woodbook}). However, a similar method does not apply to Lasso and its variants such as gLasso if one wants to use $p$ individual hyperparameters, simply because the Lasso solution is not in closed form, making it very challenging to optimize $p$ individual hyperparameters simultaneously.

In the penalty structure \eqref{eq:pen_form}, one has the flexibility to set $\lambda$ and $\lambda_1, \ldots, \lambda_p$ to zero, resulting in exclusively ridge or gLasso regularization. The regularization techniques in this discussion, namely Smoothing splines, COSSO, SPAM and GAMSEL,  are summarized within this framework in Table \ref{tb:regularization_methods}. For detailed expressions of $J_R$ or $J_L$ for each method, please refer to the appendix \ref{sec:appendix_regularization_review}.

\begin{table}[ht]
\scriptsize
\begin{center}
\caption{Regularization Methods in Additive Model}
\label{tb:regularization_methods}
\begin{tabular}{ |c|c||c|c||c|c||c|c| }
\hline
Method & Type & $J_R$ & $\lambda_1, \cdots, \lambda_p$ & $J_L$ & $\lambda$ & Smoothing & Sparsity\\
\hline
\texttt{mgcv} & Ridge &Roughness & Adaptive & \xmark &  \xmark & Individual & \xmark\\
COSSO & gLasso &\xmark & \xmark & Magnitude \& Roughness & Tuning & Global & \cmark\\
SPAM & gLasso & \xmark & \xmark & Magnitude & Tuning & \xmark & \cmark\\
GAMSEL & Mixture & Roughness & Pre-specified & Magnitude \& Roughness & Tuning & Global & \cmark\\
\hline
\end{tabular}
\end{center}
\end{table}

Smoothing Spline is ridge-only, excelling in individual smoothing but not inducing exact sparsity. COSSO and SPAM belong to the gLasso category. COSSO combines magnitude and roughness for sparsity and global smoothing, while SPAM focuses solely on sparsity. In GAMSEL, the Lasso parameter $\lambda$ serves as a global roughness measure and induces sparsity, thus combining ridge and gLasso -- albeit in a pre-specified manner. We must emphasize that the global smoothing arising in COSSO and GAMSEL is an unintended consequence of the shared group lasso penalty $\lambda$ and can be problematic when handling functions with massively varying roughness. In summary, none of these methods effectively balance both individual smoothing and feature selection tasks.

Indeed, we argue via the following logical chain that it's not just the listed methods but any regularization approach with penalty structure such as \eqref{eq:pen_form} will face challenges in simultaneously achieving both smoothing and selection. To enable both individual smoothing and selection, the penalty must encompass both ridge and gLasso components. Once both components are included in the penalty, the adaptative learning of individual smoothness parameters becomes impractical due to the multitude of hyperparameters to tune. Further, the global impact of the gLasso parameter $\lambda$ complicates the task of achieving individual smoothing. In essence, this regularization framework presents inherent challenges when attempting to simultaneously balance the objectives of smoothing and variable selection.




\subsection{Bayesian Methods}
\label{sec:bayesian methods}

The idea of priors in Bayesian methods is inherently linked with frequentist regularization techniques. In our setup,  applying a penalty in smoothing splines is equivalent to assigning each basis coefficient $\mathbf{\beta}_j$ an independent Gaussian prior 
$ p(\bm{\beta}) =  \prod_{j=1}^p N(\bm{\beta}_j|\mathbf{0}, \frac{\sigma^2}{\lambda_j}S_j^{-1})$
Here,  $\{S_j, \lambda_j\}_{j=1}^p$, denote smoothing matrices and smoothness parameters for each $f_j$ respectively, and $\sigma^2$ is the error variance. The resulting posteriors can then be used to find data-adaptive estimates of the smoothness parameters $\{\lambda_1, \cdots, \lambda_p\}$ \citep{woodbook}. Moreover, Bayesian strategies enable uncertainty quantification and thus adds interpretability.

For feature selection in additive models, authors \citep{scheipl2012, he2022bayesian} have often relied on the spike and slab prior $\label{eq: spike_slab_prior}
    p(\bm{\beta}) =  \Pi_{j=1}^p \{(1-\pi)p_{spike}(\bm{\beta}_j) + \pi p_{slab}(\bm{\beta}_j)\}.$
The more dominant the slab part is the more likely it is that the parameter is active in the final model and vice versa. Thus, measuring the posterior inclusion probability of the spike part for a feature provides a principled strategy for feature selection. However, there are several drawbacks to this approach. Firstly for our purposes, it's not clear how individual smoothing is considered within the model. Additionally, determining the numerous prior hyperparameters (spike variance, slab variance,  prior inclusion probability) can be challenging in practice. Finally, this approach doesn't induce exact sparsity adding subjective biases in feature selection.
 


\subsection{Automatic Relevance Determination}
Developed by \cite{mackay1995probable} and \cite{neal1996bayesian}, Automatic Relevance Determination (ARD) initially emerged in the context of neural networks for network compression. ARD induces sparsity in a parametric model with parameters $\Theta = (\theta_1, \cdots, \theta_p)$ by assigning each parameter $\theta_j$ an independent normal prior with a zero mean and a learnable variance $r_j^2$: $p(\Theta) = \Pi_{j=1}^pN(\theta_j|0, r_j^2)$. This approach aligns well with our goals because the roughness of each $f_j$ is encapsulated in its prior variance. Therefore, as discussed in the previous section on smoothing splines, individual smoothing reduces to finding appropriate prior variance parameters from a Bayesian perspective. A small prior variance suggests less relevance compared to input features with larger prior variances. Recent advancements in ARD have shown that these prior variances can not only be small but can also reach exact zero \citep{tipping2003fast, wipf2007new}, indicating a point mass prior (and thus posterior) distribution at zero for $\theta_j$, and thus inducing exact sparsity. In the context of additive models, our approach extends the group-based version of ARD, offering a seamless way to achieve both smoothing and selection by learning each component's prior variance. This adaptive learning of smoothness for every component, coupled with the potential for exact sparsity when the prior variance reaches zero, makes ARD an ideal fit for our objectives, essentially serving as a \textit{`one-stone-two-birds'} method.

\section{VARD for GAM}
Our framework is developed by combining with the ideas of variational inference and ARD. Hence, we term our method as \textit{Variational Automatic Relevance Determination (VARD)}. VARD excels at simultaneously achieving exact sparsity while fitting the smoothness of individual components $f_j$ adaptively. As a bonus feature, we can also distinguish if each $f_j$ is a zero, linear, or nonlinear function.

\subsection{Setup}
\label{sec:setup}
Given a data set of $n$ observation with centered response $\mathbf{y} = (y_1, \dots, y_n)^T$ and $p$ centered predictors $\{\mathbf{x}_j = (x_{1j}\dots x_{nj})^T\}_{j=1}^p$. Consider the following additive model,
\begin{equation}
\label{eq:additive_model_regression}
    \mathbf{y} = \sum_{j=1}^p f_j(\bm{x}_j) + \bm{\epsilon}
\end{equation}
where $\bm{\epsilon} = (\epsilon_1, \cdots, \epsilon_n)^T$ is an i.i.d. noise vector following Gaussian distribution with mean zero and variance $\sigma^2$.Each $f_j$ is represented using a basis expansion consisting of one linear basis term and 
$d_j$ nonlinear basis terms, as expressed by:
    \begin{equation}
    \label{eq:basis_expansion}
        f_j(x) = \beta_{j0}x + \sum_{k=1}^{d_j}\beta_{jk}h_{jk}(x),
    \end{equation}
where $\{h_{jk}\}_{k=1}^{d_j}$ are $d_j$ nonlinear basis, $\beta_{j0}$ is the coefficient for linear basis term and $\{\beta_{jk}\}_{k=1}^{d_j}$ are the $d_j$ coefficients of nonlinear basis terms associated with  $f_j$.  

Without loss of generality, consider basis expansion with the following properties: 
    \begin{enumerate}
        \item The $j$-th feature's nonlinear basis matrix $H_j := [h_{jk}(x_{ij})]_{i=1, k=1}^{n, d_j}$ is centered and is orthogonal to itself and to the linear component \(\bm{x}_j\) of the \(j\)-th feature. This alleviates potential ambiguities in distinguishing between 
$f_j$ as a linear or nonlinear function when it is non-zero.
        \item The smoothing matrix \(S_j := [\int h^{''}_{jk_1}(x)h^{''}_{jk_2}(x)\mathrm{d}x]_{k_1=1,k_2=1}^{d_j,d_j}\) for the nonlinear basis functions of the \(j\)-th feature is set to the identity matrix \(I_{d_j}\).
    \end{enumerate}    
Note, these orthogonality constraints apply exclusively within the context of each individual feature. For distinct feature indices \(j \neq j'\), the inner products \(\bm{x}_j^T\bm{x}_{j'}\), \(\bm{x}_j^TH_{j'}\), and \(H_j^T H_{j'}\) may assume arbitrary values.  While they may seem non-trivial at first glance, such choices of nonlinear basis not only exist but can also be constructed by standardizing nonlinear basis terms that initially don't satisfy these criteria without altering the functional space they represent. Further details regarding the standardization procedure can be found in Appendix \ref{sec:appendix_standardization}.

After this basis expansion, we have the matrix representation of model \eqref{eq:additive_model_regression} as the following sum of $2p$ many terms:
\begin{equation}
\label{eq:additive_model_regression_matrix}
    \mathbf{y} = \sum_{j=1}^{2p}Z_j\bm{\beta}_j + \mathbf{\epsilon}.
\end{equation}
Here the non-linear components $\{Z_j\}_1^p:=\{H_j\}_1^p$, $\{\bm{\beta}_j\}_1^p:= \{(\beta_{j1}, \cdots, \beta_{jd_j})^T\}_{j=1}^p$ are confined in the first $p$ terms and the linear components $\{Z_j\}_{p+1}^{2p} := \{\bm{x}_j\}_1^p$, $\{\bm{\beta}_j\}_{p+1}^{2p} := \{\beta_{j0}\}_{j=1}^p$ in the final $p$ terms. By inducing component-wise sparsity in equation \eqref{eq:additive_model_regression_matrix}, we can determine whether each $f_j$ is zero, linear, or nonlinear according to the criteria laid down in Table \eqref{tab:feature class based on betaj}.

\begin{table}[h]
\centering
\caption{Feature Classification based on coefficients of basis terms}
\label{tab:feature class based on betaj}
\begin{tabular}{l|p{10cm}}
\toprule
\textbf{Category} & \textbf{Condition} \\
\midrule
Zero & All coefficients ($\beta_{j0}, \beta_{j1}, \cdots, \beta_{jd_j}$) are zero. \\
\midrule
Linear & $\beta_{j0}$ is nonzero; $\beta_{j1}, \cdots, \beta_{jd_j}$ are zero. \\
\midrule
Nonlinear & At least one $\beta_{j1}, \cdots, \beta_{jd_j}$ is nonzero. \\
\bottomrule
\end{tabular}
\end{table}

In all our future discussions, we will define the dimensionality of $Z_j$ for linear terms by setting $d_j = 1$ for all $j \in {p + 1, \ldots, 2p}$. Finally, we denote the Gram matrix of each component $Z_j$ as $V_j := Z_j^T Z_j$. Due to our orthogonality assumptions, all Gram matrices are diagonal. For simplicity, we will refer to the $k$-th diagonal element of $V_j$ as $v_{jk}$.


\subsection{Group ARD Prior}

In our framework, we start by extending the idea of ARD to group selection on model \eqref{eq:additive_model_regression_matrix} by assigning coefficients $\bm{\beta} =(\bm{\beta}_1, \cdots, \bm{\beta}_{2p})$ with the following prior:
\begin{equation}
\label{eq:prior}
    p(\bm{\beta}) = \Pi_{j=1}^{2p}p_j(\bm{\beta}_j) =  \Pi_{j=1}^{2p}\N(\bm{\beta}_j|\mathbf{0}, r_j^2I_{d_j})
\end{equation}
where $\{r_j^2\}_{j=1}^{2p}$ are the $2p$ non-negative prior variances for each component that we will learn. 

As discussed in the previous section, the $r_j^2$ values that are learned to be zero determine which $\beta_j$ coefficients will be zero. This, in turn, dictates whether a feature's contribution is zero, linear, or non-linear. 

Further, $r_j^2$ can also act as smoothness parameter. This can be seen by viewing our model setup \eqref{eq:additive_model_regression_matrix} with prior \eqref{eq:prior} from the point of view of the classic smoothing spline. Notice that the negative log-likelihood of the posterior distribution of $\bm{\beta}$ of our model has the following form:
\begin{equation}
\label{eq:apriori}
-\log p(\bm{\beta} | \mathbf{y}) \propto \|\mathbf{y} - \sum_{j=1}^{2p}Z_j\bm{\beta}_j\|_2^2 + \sum_{j=1}^{p}\frac{\sigma^2}{r_j^2}\bm{\beta}_j^T S_j\bm{\beta}_j + \sum_{j=p+1}^{2p}\frac{\sigma^2}{r_j^2} \bm{\beta}_j^2.
\end{equation}
Here the smoothing matrices $S_j$ for nonlinear components are recovered in equation \eqref{eq:apriori} simply because our basis are standardized such that $\{S_j = I_{d_j}\}_{j=1}^p$. The first two terms on the RHS of \eqref{eq:apriori} are equal to the classic smoothing spline objective after basis expansion: 
$$\|\mathbf{y} - \sum_{j=1}^pf_j(\bm{x}_j)\|_2^2 + \sum_{j=1}^p\frac{\sigma^2}{r_j^2} \int f_j^{''}(t)^2dt. $$
This implies that our model setup is equivalent to the classic smoothing spline except for additional ridge-like penalty terms imposed on linear components, and our prior variances $\{r_j^2\}_{j=1}^p$ for the nonlinear components play the role as the smoothness parameter here.

\subsection{$\alpha$-Variational Evidence Lower Bound Objective}
Given the large number of parameters and the analytical intractability of our problem, we turn to modern approximation methods. Specifically, we employ Variational Inference (VI) methods \citep{bishop:2006:PRML} in conjunction with empirical Bayes. VI frames marginalization required during Bayesian inference as an optimization problem. This is achieved by assuming the form of the posterior distribution, known as the variational distribution, and performing optimization to find the assumed density closest to the true posterior. This assumption simplifies computation and provides some level of tractability. 

We define the variational distribution for $\bm{\beta} = (\bm{\beta}_1, \cdots, \bm{\beta}_{2p})$ as the following product of independent Gaussian distributions:

\begin{equation}
    \label{eq:variational}
    q(\bm{\beta})=\Pi_{j=1}^{2p}q_j(\bm{\beta}_j) = \Pi_{j=1}^{2p}N(\bm{\beta}_j|\bm{\mu}_j, \Phi_j)
\end{equation}
where $ \left[\bm{\mu}_j\right]_{1 \times d_j} $ is the mean and $   \left[ \Phi_j \right]_{d_j\times d_j} $ is the variance of the variational distribution for each $\bm{\beta}_j$.

Notice that each $\beta_{jd_j}$ shares a common prior variance $r_j^2$ in \eqref{eq:apriori}, while the variational variance for the vector $\bm{\beta}_j$ is a fully learnable $d_j \times d_j$ matrix in \eqref{eq:variational}. The shared prior variance is used to align the objective function with the smoothing spline's Bayesian interpretation, where each $\beta_{jd_j}$ shares a common prior. The full, learnable variational variance simplifies computations during coordinate descent in Section \ref{sec:Coordinate Descent Algorithm}.  Imposing a shared variational variance would complicate the computation process. Given data model \eqref{eq:additive_model_regression_matrix}, prior setup \eqref{eq:prior} and variational setup \eqref{eq:variational}, we have the following objective function: 
\begin{equation} 
    \mathcal{L}(\bm{\mu}, \bm{\Phi}, \bm{r}^2) = \underbrace{- \E_{q} \left[\log p\left(\mathbf{y} \mid \bm{\beta}_1, \cdots, \bm{\beta}_{2p} \right)\right]}_{(I)} + \underbrace{\tilde{\alpha} \cdot \KLD{q}{p}}_{(II)} \label{eq:elbo} 
\end{equation}
where $\bm{\mu} = (\bm{\mu}_1, \cdots, \bm{\mu}_{2p})$, $\bm{\Phi} = (\Phi_1, \cdots, \Phi_{2p})$, and $\bm{r}^2 = (r_1^2, \cdots, r_{2p}^2)$. Our goal is to minimize $\mathcal{L}$ w.r.t. $(\bm{\mu}, \bm{\Phi}, \bm{r}^2)$. One can then use variational means $\bm{\mu}_1, \cdots, \bm{\mu}_{2p}$ as estimators for coefficients to make predictions.

This objective function is closely related to the expected lower bound (ELBO). The first term (I) represents the expected log-likelihood of the data with $\E_{q}$ denoting the expectation with respect to the variational distribution $q(\bm{\beta})$ defined in \eqref{eq:variational}. The second
term (II) represents the  KL divergence between the variational posterior $q$ and the prior $p$. Ignoring constants, we can write down the detailed expressions of each term in equation \eqref{eq:elbo} as follows:
\begin{align}
    (I) & = - \E_{\underset{j = 1, \dots, 2p}{\bm{\beta}_j \sim \N\left(\bm{\mu}_j, \Phi_j\right)}} \left[\log \N\left(\mathbf{y} \middle| \sum_{j=1}^{2p} Z_j \bm{\beta}_j, \sigma^2 I_n\right)\right] \nonumber \\
     &\overset{c}{=} \frac{1}{2\sigma^2} \left\{\left\|\mathbf{y} - \sum_{j=1}^{2p} Z_j \bm{\mu}_j \right\|_2^2 + \sum_{j=1}^{2p} \operatorname{tr}\left(\Phi_j Z_j^T Z_j\right)\right\} 
     \label{eq:elbo_spec_loss_term} \\
     (II) & = 
    \tilde{\alpha} \cdot \sum_{j=1}^{2p} \KLD{\N\left(\bm{\mu}_j, \Phi_j\right)}{\N\left(\mathbf{0}, r_j^2 I_{d_j}\right)} \nonumber \\
    &\overset{c}{=} \frac{\tilde{\alpha}}{2} \sum_{j=1}^{2p} \left\{d_j \log r_j^2 - \log \det \Phi_j + \frac{\left\|\bm{\mu}_j\right\|_2^2 + \operatorname{tr}\left(\Phi_j\right)}{r_j^2}\right\} \label{eq:elbo_spec_penalty_term}
\end{align}

 We discuss the three key elements of our objective function below:

\begin{itemize}
    \item \textbf{Emprical Bayes Prior:} In a typical ELBO, the prior is pre-specified. In our framework, however, we estimate the prior variances $r_j^2$ alongside the variational parameters by optimizing \eqref{eq:elbo}. As mentioned before, the feature selection is performed naturally whenever some of the $r_j^2$ reach exactly zero. We will later observe (see Section \ref{sec:setup}) that $r_j^2 = 0$ implies a point mass variational distribution at zero ($\bm{\mu}_j = \mathbf{0}, \Phi_j = \mathbf{0}$) as well. Thus, feature selection can be equivalently performed by checking the sparsity of the estimated $\bm{\mu}_1, \cdots, \bm{\mu}_{2p}$.

    \item \textbf{$\alpha$-variational inference:} In \eqref{eq:elbo}, we've introduced a positive hyperparameter $\tilde{\alpha} > 0$. This concept of adjusting the weight of the KL divergence, known as $\alpha$-variational inference \citep{yang2020alpha}, has been employed previously in the context of variational autoencoder \citep{higgins2017betavae}. Similar to the tuning parameter in Lasso, $\tilde{\alpha}$ can be adjusted through cross-validation for optimal model performance. Larger values of $\tilde{\alpha}$ tend to encourage a sparser model and vice versa.  When $\tilde{\alpha} = 1$, the objective function \eqref{eq:elbo} is equivalent to the original ELBO function in variational inference.

    \item \textbf{Working with Unknown $\sigma^2$:}  Recall that our objective function \eqref{eq:elbo} is the sum of equations (\ref{eq:elbo_spec_loss_term}) and \eqref{eq:elbo_spec_penalty_term}. Note that, when treated as fixed hyperparameters, $\tilde{\alpha}$ and $\sigma^2$ are inherently coupled. Since we already need to tune $\tilde{\alpha}$, we can circumvent the direct estimation of $\sigma^2$ by consolidating these two hyperparameters into a new hyperparameter $\alpha:= \tilde{\alpha}\sigma^2$, thereby simplifying equation  \eqref{eq:elbo} as follows:
\end{itemize}

\scalebox{0.9}{ 
\begin{minipage}{\textwidth}
\begin{align}
&     \mathcal{L}  = (I) + (II) \nonumber \\
    &\overset{c}{=} \frac{1}{\tilde{\alpha}\sigma^2} \left\{\left\|\mathbf{y} - \sum_{j=1}^{2p} Z_j \bm{\mu}_j \right\|_2^2 + \sum_{j=1}^{2p} \operatorname{tr}\left(\Phi_j Z_j^T Z_j\right)\right\} + \sum_{j=1}^{2p} \left\{d_j \log r_j^2 - \log \det \Phi_j + \frac{\left\|\bm{\mu}_j\right\|_2^2 + \operatorname{tr}\left(\Phi_j\right)}{r_j^2}\right\} \label{eq:elbo_alter_1} \\
    &\overset{c}{=} \boxed{\frac{1}{\alpha} \left\{\left\|\mathbf{y} - \sum_{j=1}^{2p} Z_j \bm{\mu}_j \right\|_2^2 + \sum_{j=1}^{2p} \operatorname{tr}\left(\Phi_j Z_j^T Z_j\right)\right\} + \sum_{j=1}^{2p} \left\{d_j \log r_j^2 - \log \det \Phi_j + \frac{\left\|\bm{\mu}_j\right\|_2^2 + \operatorname{tr}\left(\Phi_j\right)}{r_j^2}\right\}} \label{eq:elbo_alter_2}
\end{align}
\end{minipage}
}

\eqref{eq:elbo_alter_2} will be our final objective function.

\section{Coordinate Descent Algorithm}
\label{sec:Coordinate Descent Algorithm}
In this section, we introduce our blockwise coordinate descent algorithm for optimizing \eqref{eq:elbo_alter_2} with respect to prior variances (smoothness parameter) $\{r_j\}_{j=1}^{2p}$, variational means $\{\bm{\mu}_j\}_{j=1}^{2p}$ and variational variances $\{\Phi_j\}_{j=1}^{2p}$ given a fixed positive hyperparameter $\alpha$. We will elaborate on how our method can reach exact sparsity and also discuss how to tune $\alpha$ efficiently.

\subsection{Algorithm}
The core idea of our coordinate descent algorithm can be summarized as follows. We start by initializing a set of prior variances, variational means, and variational variances $\{r_j^2, \bm{\mu}_j, \Phi_j\}_{j=1}^{2p}$; then sequentially optimize the objective function \eqref{eq:elbo_alter_2} by focusing on one parameter block at a time, which includes $\big(r_j^2, \bm{\mu}_j, \Phi_j\big)$, while keeping all other parameters fixed. We iterate this sequential updating procedure until a convergent solution is achieved. 

To optimize one block of $(r_j^2, \bm{\mu_}j, \Phi_j)$ while keeping all other parameters fixed, we can see that \eqref{eq:elbo_alter_2} simplifies to the following marginal objective function w.r.t $(r_j^2, \bm{\mu_}j, \Phi_j)$:
\begin{align}
   & \mathcal{L}_j \left(r_j^2, \bm{\mu}_j, \Phi_j\right)    \nonumber \\
   \overset{c}{=}  & \frac{1}{\alpha} \left\{\left\|\mathbf{y}_{(-j)} - Z_j \bm{\mu}_j \right\|_2^2 + \operatorname{tr}\left(\Phi_j Z_j^T Z_j\right)\right\} + \left\{d_j \log r_j^2 - \log \det \Phi_j + \frac{\left\|\bm{\mu}_j\right\|_2^2 + \operatorname{tr}\left(\Phi_j\right)}{r_j^2}  \right\} \label{eq:one_block_problem}
\end{align}
where, $\mathbf{y}_{(-j)} := \mathbf{y} - \sum_{{j'}\neq j} Z_{j'}\bm{\mu}_{j'}$.



While optimizing $\mathcal{L}_j$ might initially appear challenging, it turns out that, for a fixed $r_j^2$, one can easily determine the optimal values of $\bm{\mu}_j$ and $\Phi_j$ as functions of $r_j^2$. By substituting the optimal expressions for $\bm{\mu}_j$ and $\Phi_j$ as functions of $r_j^2$ into \eqref{eq:one_block_problem}, we can simplify the problem to optimizing the following univariate objective function \(G_{\alpha, \bm{\eta}_j, V_j}(r_j^2)\). This function depends on the tuning parameter \(\alpha\), the \(j^{\text{th}}\) Gram matrix \(V_j := Z_j^T Z_j\), and \(\bm{\eta}_j := Z_j^T \mathbf{y}_{(-j)}\).
\begin{equation}
\label{eq:univariate_objective}
G_{\alpha, \bm{\eta}_j, V_j}(r_j^2) := \sum_{k=1}^{d_j} \left\{ \alpha \log \left( v_{jk} r_j^2 + \alpha \right) - \frac{\eta_{jk}^2 r_j^2}{v_{jk} r_j^2 + \alpha} \right\}
\end{equation}
Here, \(v_{jk}\) is the \(k^{\text{th}}\) diagonal element of the matrix \(V_j\), and \(\eta_{jk}\) is the \(k^{\text{th}}\) element of the vector \(\bm{\eta}_j\). We summarize this result with the following proposition, and the proof can be found in Appendix \ref{sec:appendix_parameter_reduction}.


\begin{prop}
\label{prop:reduce_to_univariate_problem}
$(\hat{r}_j^2, \hat{\bm{\mu}}_j, \hat{\Phi}_j)$ minimize objective \eqref{eq:one_block_problem} if only if
\begin{align}
    \hat{r}_j^2 &= \underset{r^2 \geq 0}{\argmin} \quad G_{\alpha, \bm{\eta}_j, V_j}(r^2)\label{eq:univariate_problem}\\
    \hat{\bm{\mu}}_j &= \hat{r}_j^2 \cdot (\hat{r}_j^2Z_j^TZ_j + \alpha I_{d_j})^{-1}Z_j^T\mathbf{y}_{(-j)} \label{eq:mu_expression}\\
    \hat{\Phi}_j &= \hat{r}_j^2 \cdot \alpha (\hat{r}_j^2Z^T_jZ_j + \alpha I_{d_j})^{-1}\label{eq:phi_expression}
\end{align}
\end{prop}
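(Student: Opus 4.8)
The plan is to prove this by partial minimization (profiling): since $\mathcal{L}_j$ is minimized jointly over $(r_j^2, \bm{\mu}_j, \Phi_j)$, I would first fix $r_j^2 > 0$ and minimize over the pair $(\bm{\mu}_j, \Phi_j)$, then substitute the resulting optimizers back to obtain a one-dimensional problem in $r_j^2$. The equivalence then follows from the elementary identity $\min_{r_j^2, \bm{\mu}_j, \Phi_j}\mathcal{L}_j = \min_{r_j^2}\big[\min_{\bm{\mu}_j, \Phi_j}\mathcal{L}_j\big]$ together with uniqueness of the inner minimizer, which gives both directions of the ``if and only if'' at once.

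First I would observe that, for fixed $r_j^2 > 0$, the objective \eqref{eq:one_block_problem} is separable and jointly convex in $(\bm{\mu}_j, \Phi_j)$: the $\bm{\mu}_j$ terms form a strictly convex quadratic (ridge regression), the $\Phi_j$ terms are linear in $\Phi_j$ plus the convex barrier $-\log\det\Phi_j$ on the positive-definite cone, and there are no cross terms. Setting the gradient in $\bm{\mu}_j$ to zero gives the normal equations $(r_j^2 Z_j^T Z_j + \alpha I_{d_j})\bm{\mu}_j = r_j^2 Z_j^T\mathbf{y}_{(-j)}$, which yields \eqref{eq:mu_expression}; using $\partial_\Phi \log\det\Phi = \Phi^{-1}$ and setting the gradient in $\Phi_j$ to zero gives $\Phi_j^{-1} = \tfrac{1}{\alpha}Z_j^T Z_j + \tfrac{1}{r_j^2}I_{d_j}$, which yields \eqref{eq:phi_expression}. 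Convexity guarantees these stationary points are the unique global minimizers for the fixed $r_j^2$.

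Next I would substitute $\hat{\bm{\mu}}_j(r_j^2)$ and $\hat{\Phi}_j(r_j^2)$ back into \eqref{eq:one_block_problem}, invoking the diagonality of $V_j = Z_j^T Z_j$ (from the orthogonality assumptions in Section \ref{sec:setup}), which diagonalizes every matrix in play and decouples the profiled objective coordinate-by-coordinate with $w_{jk} := v_{jk}r_j^2 + \alpha$. The computation is then algebraic: the residual-plus-trace terms and the penalty terms combine so that the contributions independent of $\bm{\eta}_j$ collapse to the constant $1$ per coordinate, the $\eta_{jk}^2$ contributions collapse to $-\eta_{jk}^2 r_j^2/(\alpha w_{jk})$, and the $d_j\log r_j^2$ and $-\log\det\Phi_j$ terms recombine into $\sum_k \log w_{jk}$ (up to the constant $-d_j\log\alpha$). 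Multiplying the profiled objective by the positive constant $\alpha$ and discarding constants yields exactly $G_{\alpha, \bm{\eta}_j, V_j}(r_j^2)$, so its minimizer coincides with the optimal $r_j^2$, establishing \eqref{eq:univariate_problem}.

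The main obstacle I anticipate is the boundary case $r_j^2 = 0$, which is precisely the exact-sparsity regime the method relies on. There the terms $d_j\log r_j^2$ and $-\log\det\Phi_j$ each diverge, while the closed forms \eqref{eq:mu_expression}--\eqref{eq:phi_expression} degenerate to $\hat{\bm{\mu}}_j = \mathbf{0}$ and $\hat{\Phi}_j = \mathbf{0}$, a point-mass variational distribution. I would handle this by noting that these divergences cancel within the profiling, so that the reduced objective $G_{\alpha, \bm{\eta}_j, V_j}$ extends continuously to $r_j^2 = 0$ with the finite value $\sum_k \alpha\log\alpha$; the minimization over $r_j^2 \geq 0$ is thus well posed once one adopts the convention that $r_j^2 = 0$ forces $(\bm{\mu}_j, \Phi_j) = (\mathbf{0}, \mathbf{0})$. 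The remainder is routine verification that stage-wise minimization, combined with uniqueness of the inner solution, delivers both directions of the equivalence.
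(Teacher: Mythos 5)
Your proposal is correct, and its overall architecture matches the paper's proof: profile out $(\bm{\mu}_j,\Phi_j)$ for fixed $r_j^2$, substitute the closed forms back, and use diagonality of $V_j$ to collapse the profiled objective to $G_{\alpha,\bm{\eta}_j,V_j}$ coordinate-by-coordinate (your per-coordinate bookkeeping --- the $\eta$-free terms summing to $1$, the $\eta_{jk}^2$ terms giving $-\eta_{jk}^2 r_j^2/(\alpha(v_{jk}r_j^2+\alpha))$, and the log terms recombining into $\log(v_{jk}r_j^2+\alpha)$ --- checks out against the paper's algebra). Where you genuinely diverge is the inner minimization. You argue by convex calculus: separability in $(\bm{\mu}_j,\Phi_j)$, strict convexity of the ridge quadratic and of $-\log\det$ on the positive-definite cone, and first-order conditions via $\partial_{\Phi}\log\det\Phi=\Phi^{-1}$. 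The paper instead rewrites the block objective (up to constants) as $\KLD{\N(\bm{\mu}_j,\Phi_j)}{p(\bm{\beta}_j\mid \mathbf{y}_{(-j)})}$, where the exact conditional posterior is the Gaussian $\N\bigl((Z_j^TZ_j+\tfrac{\alpha}{r_j^2}I_{d_j})^{-1}Z_j^T\mathbf{y}_{(-j)},\,(\tfrac{Z_j^TZ_j}{\alpha}+\tfrac{I_{d_j}}{r_j^2})^{-1}\bigr)$, and invokes nonnegativity of KL with equality iff the distributions coincide; since that posterior lies inside the variational family, the optimizer is read off with no matrix differentiation at all. Your route is more elementary and self-contained, and it would survive in settings where the exact conditional posterior falls outside the variational family (where the conjugacy trick is unavailable); the paper's route is shorter, sidesteps justifying stationarity-implies-global-optimality on the cone, and makes the statistical content transparent --- at the optimum the variational block equals the exact conditional posterior, as in standard coordinate-ascent variational inference. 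Your treatments of the boundary $r_j^2=0$ are essentially the same (the paper forces $(\bm{\mu}_j,\Phi_j)=(\mathbf{0},\mathbf{0})$ on pain of infinite KL; you adopt the same convention and note that $G$ extends continuously to $0$ with value $d_j\alpha\log\alpha$), and your explicit appeal to the profiling identity plus uniqueness of the inner minimizer is a cleaner justification of the ``if and only if'' than the paper spells out.
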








Proposition \ref{prop:reduce_to_univariate_problem} is of utmost importance, as it implies that the three-dimensional optimization problem of \eqref{eq:one_block_problem} can be simplified to a univariate optimization problem \eqref{eq:univariate_problem}. Unfortunately, the global minimum of the data-dependent function $G_{\alpha, \bm{\eta}_j, V_j}(r^2)$ is not generally available in closed form unless $d_j = 1$. The function itself can be non-convex. However, by analyzing the form of $G_{\alpha, \bm{\eta}_j, V_j}(r^2)$ in \eqref{eq:univariate_objective}, we discover that it is possible to narrow down potential range of the global minimum from the entire interval $r^2 \in [0, \infty)$ to a much smaller closed interval $[l_j, u_j]$, where
\begin{align}
    l_j := \underset{k = 1, \cdots, d_j}{\min}
    \left\{\left(\frac{\eta^2_{jk} - \alpha v_{jk}}{v_{jk}^2}\right)_{+}\right\}; 
    u_j := \underset{k = 1, \cdots, d_j}{\max}\left\{\left(\frac{\eta^2_{jk} - \alpha v_{jk}}{v_{jk}^2}\right)_{+}\right\}.
\end{align}
We formally state this property as the following proposition, with the proof provided in Appendix \ref{sec:appendix_univariate_solution}.

\begin{prop}
\label{prop:univariate_problem_interval}
The optimal solution for univariate problem \eqref{eq:univariate_problem} exists and must be within closed interval $[l_j, u_j]$.
\end{prop}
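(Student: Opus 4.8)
The plan is to set $t = r^2 \ge 0$ and exploit the separable structure $G_{\alpha,\bm{\eta}_j,V_j}(t) = \sum_{k=1}^{d_j} g_k(t)$, where each summand is $g_k(t) = \alpha \log(v_{jk} t + \alpha) - \tfrac{\eta_{jk}^2 t}{v_{jk} t + \alpha}$. First I would differentiate a single summand. Combining the two terms over the common denominator $(v_{jk}t+\alpha)^2$, a direct computation gives
\begin{equation*}
g_k'(t) = \frac{\alpha}{(v_{jk} t + \alpha)^2}\bigl(v_{jk}^2 t + \alpha v_{jk} - \eta_{jk}^2\bigr).
\end{equation*}
Since $\alpha > 0$ and $v_{jk} > 0$ force the prefactor to be strictly positive on $[0,\infty)$, the sign of $g_k'(t)$ equals the sign of the affine function $v_{jk}^2 t + \alpha v_{jk} - \eta_{jk}^2$, whose unique root is $t_k^\star := (\eta_{jk}^2 - \alpha v_{jk})/v_{jk}^2$.

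It follows that each $g_k$ is strictly decreasing for $t < t_k^\star$ and strictly increasing for $t > t_k^\star$; restricted to $[0,\infty)$, this means $g_k$ decreases and then increases with turning point at $\tilde t_k := (t_k^\star)_+$ (and is simply increasing when $t_k^\star \le 0$). The key step is then to propagate these coordinatewise signs to the sum. For $t < l_j = \min_k \tilde t_k$, every coordinate satisfies $t < \tilde t_k$, hence $g_k'(t) < 0$ for all $k$ and $G' = \sum_k g_k' < 0$, so $G$ is strictly decreasing on $[0, l_j)$; symmetrically, for $t > u_j = \max_k \tilde t_k$ every $g_k'(t) > 0$, so $G$ is strictly increasing on $(u_j, \infty)$. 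These statements remain valid, vacuously or on the appropriate half-line, in the boundary cases $l_j = 0$ or $u_j = 0$ created by the positive-part truncation.

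For existence and the localization, I would invoke continuity: $G$ is continuous on the compact interval $[l_j, u_j]$ and hence attains its minimum there at some $t^\star$. The monotonicity just established upgrades this to a global minimizer on $[0,\infty)$: any $t \in [0, l_j)$ has $G(t) > G(l_j) \ge G(t^\star)$ since $l_j \in [l_j, u_j]$, and any $t \in (u_j, \infty)$ has $G(t) > G(u_j) \ge G(t^\star)$. Therefore the minimizer exists and lies in $[l_j, u_j]$, as claimed.

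I expect the main obstacle to be the bookkeeping around the truncation $(\cdot)_+$ together with the fact that $G$ is generally non-convex, so one cannot simply solve a global first-order condition. The clean resolution is that the non-convexity is confined to the interior $[l_j, u_j]$, while the behavior at the two extremes is controlled coordinate-by-coordinate through the common sign of each $g_k'$; combining these one-sided monotonicity facts with a compactness argument sidesteps any need to characterize the interior critical structure of $G$.
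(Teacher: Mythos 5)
Your proof is correct and follows essentially the same route as the paper's: differentiate each summand $g_k$, read off that it decreases on $[0,(\eta_{jk}^2-\alpha v_{jk})_+/v_{jk}^2]$ and increases thereafter, and sum these monotonicity facts to confine the minimizer to $[l_j,u_j]$. Your only addition is spelling out the existence step via continuity on the compact interval $[l_j,u_j]$ and handling the $(\cdot)_+$ edge cases explicitly, which the paper leaves implicit.
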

Computationally, it is now possible to simply grid search $G_{\alpha, \bm{\eta}_j, V_j}(r^2)$ on $[l_j, u_j]$ to find $\hat{r}_j^2$. In practice, we found a grid number $1000\times d_j$ works well enough.

Apart from computational convenience, Proposition \ref{prop:univariate_problem_interval} also gives us insight into how sparsity is reached in our algorithm. As a reminder, by exact sparsity in our framework we mean $\hat{r}_j^2 = 0$. This also means $(\hat{\bm{\mu}}_j, \hat{\Phi}_j) = (\mathbf{0}, \mathbf{0})$ as a direct consequence of proposition \ref{prop:reduce_to_univariate_problem}. Further, in Proposition \ref{prop:univariate_problem_interval}, the endpoints of the interval $[l_j, u_j]$ correspond to the minimum and maximum values of the sequence $\left\{\left(\frac{\eta^2_{jk} - \alpha v_{jk}}{v_{jk}^2}\right)_{+}\right\}_{k=1}^{d_j}$. This adds interpretability between the relationship of $\alpha$ and sparsity. This is summarized in Table \ref{tab:optimal_rj} and illustrated in Figure \ref{fig:G_function}.

\begin{table}[h]
\centering
\scalebox{0.85}{
\begin{tabular}{p{7.5cm}|p{4cm}|p{5cm}}
\toprule
Condition & $(l_j, u_j)$ & Optimal $r_j^2$ \\
\midrule
$\alpha < \underset{k = 1, \cdots, d_j} {\min}\left(\frac{\eta_{jk}^2}{v_{jk}}\right)$ (left column in Fig \ref{fig:G_function})& $l_j > 0$ & Nonzero \\
\midrule
$\alpha \geq \underset{k = 1, \cdots, d_j}{\max}\left(\frac{\eta_{jk}^2}{v_{jk}}\right)$ (right column in Fig \ref{fig:G_function})& $l_j = u_j = 0$ & Exactly zero \\
\midrule
$\underset{k = 1, \cdots, d_j}{\min}\left(\frac{\eta_{jk}^2}{v_{jk}}\right) \leq \alpha < \underset{k = 1, \cdots, d_j}{\max}\left(\frac{\eta_{jk}^2}{v_{jk}}\right)$ (middle column in Fig \ref{fig:G_function}) & $l_j = 0$ & Depends on $\alpha$, $\bm{\eta}_j$, and $V_j$ \\
\bottomrule
\end{tabular}
}
\caption{Conditions for determining the optimal value of $r_j^2$.}
\label{tab:optimal_rj}
\end{table}

\begin{figure}[h]
\centering
{\includegraphics[width=0.8\textwidth]{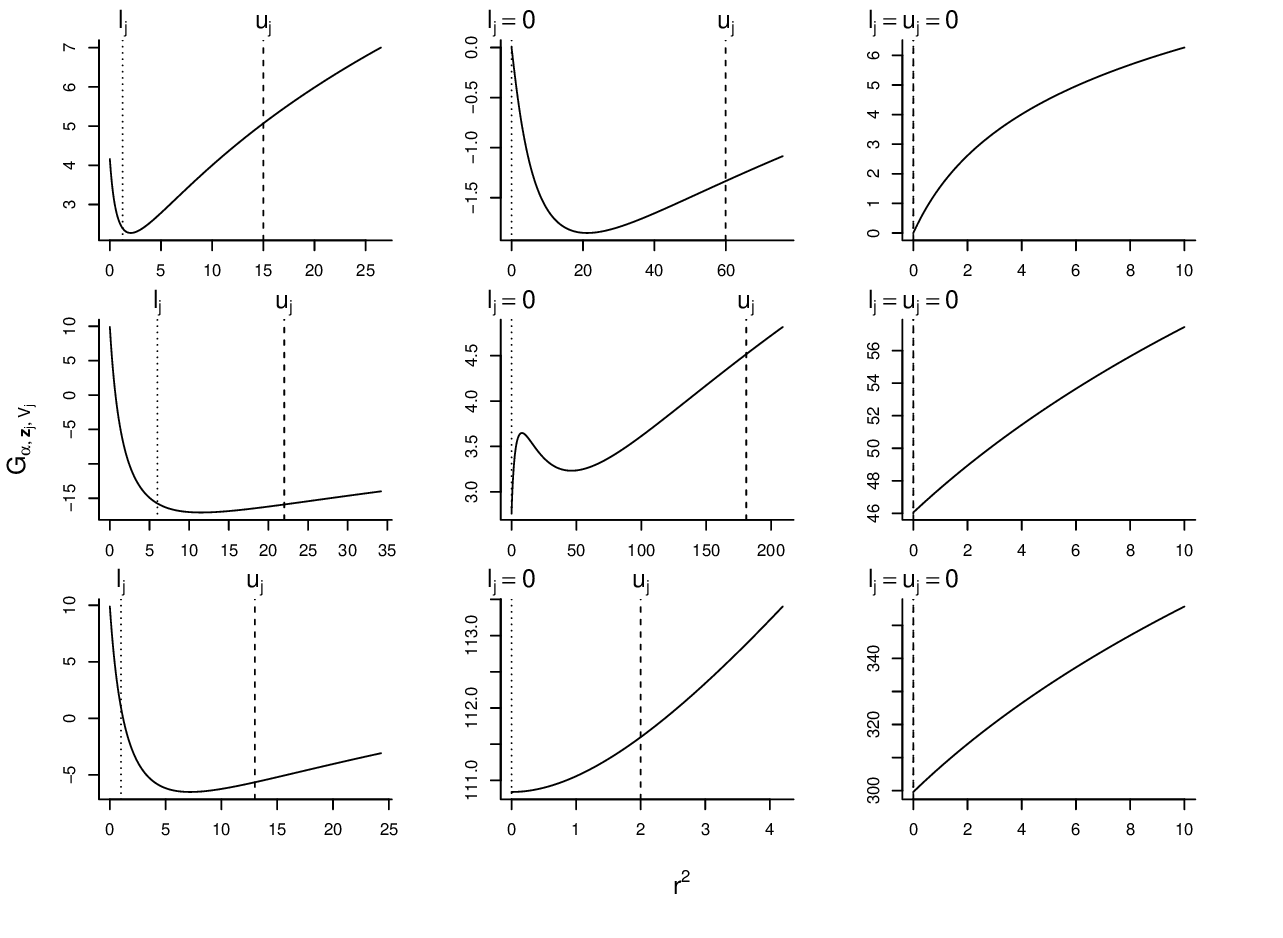}}
\caption[short]{Example of possible shapes of function $G_{\alpha, \bm{\eta}_j, V_j}(r^2)$ under different $\alpha$, $\bm{\eta}_j$, $V_j$ value. Left column: $\alpha < \underset{k = 1, \cdots, d_j}{min}(\frac{\eta_{jk}^2}{v_{jk}})$,  middle column: $ \underset{k = 1, \cdots, d_j}{min}(\frac{\eta_{jk}^2}{v_{jk}}) \leq \alpha < \underset{k = 1, \cdots, d_j}{max}(\frac{\eta_{jk}^2}{v_{jk}})$, right column: $\alpha \geq \underset{k = 1, \cdots, d_j}{max}(\frac{\eta_{jk}^2}{v_{jk}})$}.
\label{fig:G_function}
\end{figure}

With the analysis above, we present our coordinate descent algorithm as follows:

\begin{algorithm}[H]
    \label{alg:coordinate_descent}
    \caption{Coordinate Descent}
    Input $\mathbf{y}, \{Z_j\}_1^{2p}$ defined in Section \ref{sec:setup} \;
    Choose hyperparameter $\alpha > 0$ \;
    Init $\hat{\bm{\mu}}_1, \cdots, \hat{\bm{\mu}}_{2p}$ \;
    \While{Not Converge}{
    \For{$j$ in $1,\dots, 2p$}{
    $\mathbf{y}_{(-j)} = \mathbf{y} - \sum_{{j'}\neq j} Z_{j'}\bm{\mu}_{j'}$ \;
    $\bm{\eta}_j = Z_j^T\mathbf{y}_{(-j)}$ \;
    $\hat{r}_j^2  = \underset{r^2 \in [l_j, u_j]}{\argmin} \, G_{\alpha, \bm{\eta}_j, V_j}(r^2)$ \;
    $\hat{\bm{\mu}}_j = \hat{r}_j^2 \cdot (\hat{r}_j^2Z_j^TZ_j + \alpha I_{d_j})^{-1}Z_j^T\mathbf{y}_{(-j)}$  \;
    $\hat{\Phi}_j = \hat{r}_j^2 \cdot \alpha (\hat{r}_j^2Z^T_jZ_j + \alpha I_{d_j})^{-1}$
  }
  }
\end{algorithm}

\subsection{Cross Validation to Choose $\alpha$}
The hyperparameter $\alpha$ is essential for controlling sparsity in the estimated coefficients, with larger values promoting greater sparsity. In fact, from the previous discussion, when $\alpha$ exceeds a threshold $\underset{j = 1, \cdots, 2p}{\max}\{{\underset{k = 1, \cdots, d_j}{\max}\frac{(Z_j^T\mathbf{y})_k^2}{v_{jk}}}\}$, all coefficients initialized at zero will remain zero, similar to the penalty parameter $\lambda$ in \texttt{glmnet}. To determine the optimal $\alpha$, we perform cross-validation over a predefined range and utilize a warm start approach for efficiency. A U-shaped cross-validation error curve is typically observed. To mitigate overfitting, we select $\alpha$ as the maximum value within 0.15 times the standard deviation of the minimum cross-validation error, we call this $\alpha_{0.15se}$.  Further details about the cross validation process and other implementation details are provided in the Appendix \ref{sec:appendix_alpha_implementation}.

\section{Experiments}
In this section, we present a series of experiments involving both synthetic and real-world datasets, showcasing the effectiveness of our approach in providing feature selection and individual smoothing as compared to other state-of-the-art methods. 

We begin by applying our algorithm to the Boston Housing Dataset in Section \ref{sec:boston}, illustrating the step-by-step procedure, including hyperparameter optimization through cross-validation to show which features offer more relevant information in predicting the response. Following this, in Section \ref{sec:smoothness}, we demonstrate how our method adaptively determines the smoothness of each function $f_j$ and compare its performance against other methods including SPAM and GAMSEL using simulated data. Finally, in Section \ref{sec:performance_comparison}, we evaluate the prediction and selection accuracy of our approach relative to SPAM and GAMSEL. Additionally, we assess our method's capability to accurately categorize $f_j$ as either a zero, linear, or nonlinear function in contrast to GAMSEL. 

The experimental results from SPAM and GAMSEL are evaluated using their R package implementation \texttt{SAM} \citep{SAM_R_Package} and \texttt{gamsel} \citep{GAMSEL_R_Package}, respectively. While we attempted to include COSSO to compare in our experiment, we encountered technical difficulties by using their recent R package implementation \texttt{cosso} \citep{COSSO_R_Package}.

\subsection{Boston Housing Dataset}
\label{sec:boston}
\begin{figure}[!ht]
\centering
{\includegraphics[width=\textwidth]{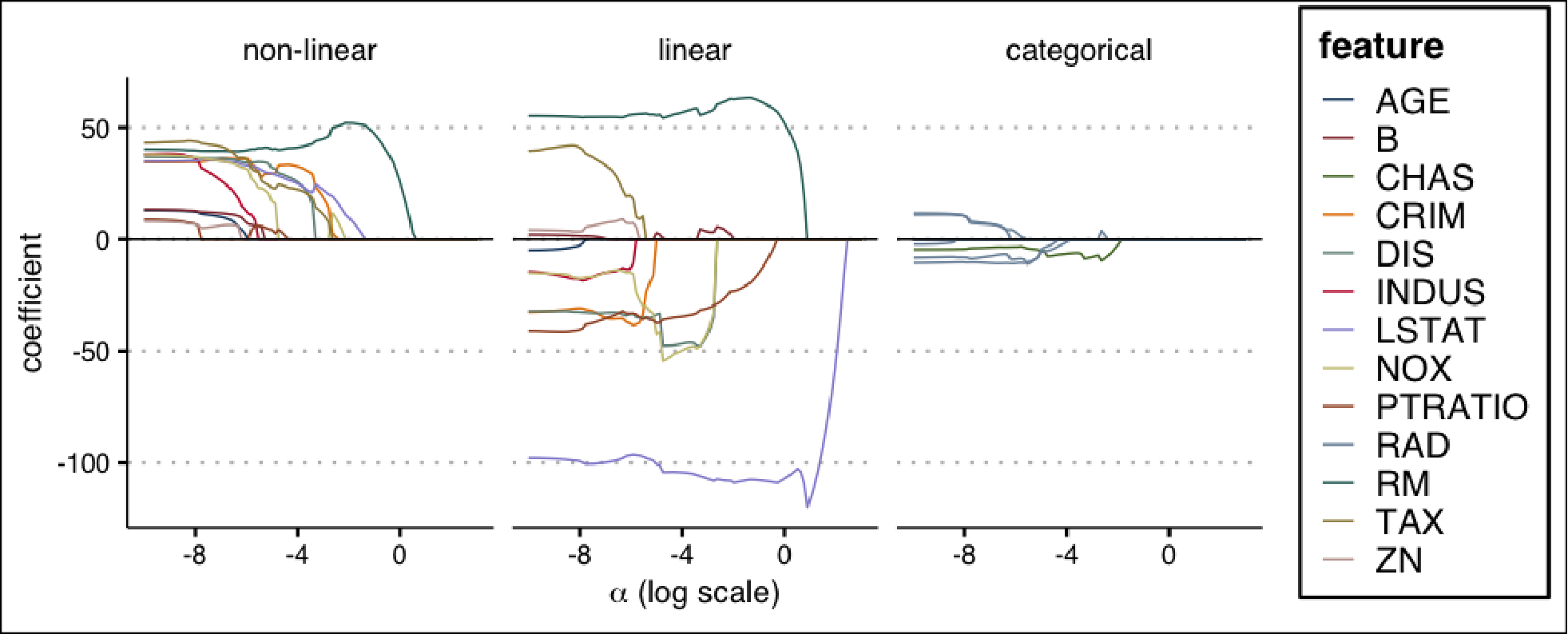}}
\caption[short]{The Model Path for the Boston Housing Dataset. Here, the coefficients $\hat{\bm{\mu}}_j$ for each term $Z_j$ are adjusted as if $Z_j^TZ_j = I_{d_j}$. For non-linear terms, we plot $\|\hat{\bm{\mu}}_j\|_2$ instead. Each numerical feature has a non-linear and linear term, both depicted in the same color. In the categorical panel, each line corresponds to a level of the categorical feature, with levels from the same feature sharing the same color.}
\label{fig:boston_model_path}
\end{figure}
We start by applying our framework to the Boston Housing Dataset, a dataset previously utilized in the works of \cite{alex2015} and \cite{ravikumar2007spam}. This dataset comprises 506 housing records in the Boston area, with the target variable being MEDV (Median value of owner-occupied homes in $1000$'s), and it includes 13 covariates. Among these covariates, we treat CHAS and RAD as categorical variables due to them having fewer than 10 unique values. Additionally, we perform a log transformation on CRIM and DIS, as these two variables exhibit significant right skewness. For the remaining 11 numerical features, we employ a natural cubic spline with 10 knots based on quantiles. Together with the two one-hot encoded categorical features, we execute our model across a sequence of 100 $\alpha$ values evenly distributed in log scale from $-10$ to $3$. The model path is depicted in Figure \ref{fig:boston_model_path}, where an increase in the value of $\alpha$ eventually leads to all term coefficients reaching precisely zero.

\begin{figure}[!ht]
\centering
{\includegraphics[width=0.8\textwidth]{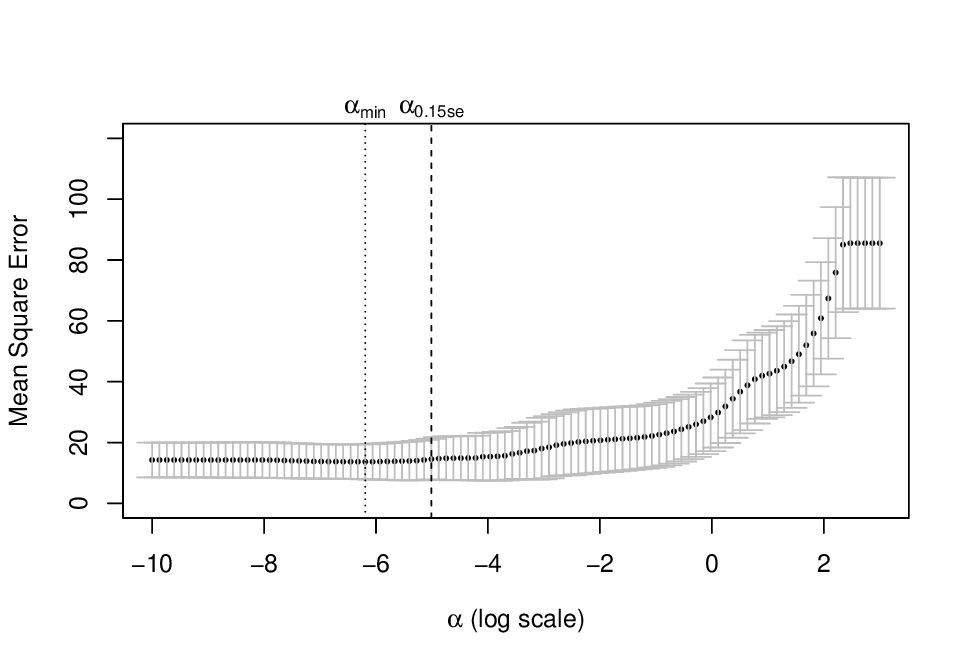}}
\caption[short]{The 10-fold Cross Validation results for the Boston Housing Dataset. Each black point represents the average cross-validation mean squared error for a specific $\alpha$ value, while the grey bars indicate one standard deviation away from the average.}
\label{fig:boston_cv}
\end{figure}

To select the optimal $\alpha$, we conducted a 10-fold cross-validation, resulting in the recommended value $\alpha_{0.15se}$. The cross-validation results are illustrated in Figure \ref{fig:boston_cv}, where the average cross-validation mean squared error for $\alpha_{min}$ and $\alpha_{0.15se}$ are 13.68 and 14.54, respectively.

With $\alpha_{0.15se}$, our model identifies ZN, AGE, and B as irrelevant features (achieving exact sparsity on both linear and non-linear terms), PTRATIO as a linear feature (achieving exact sparsity on the non-linear term but having non-zero linear coefficients), and the rest as non-linear features (having non-zero non-linear coefficients). Among the two categorical features, CHAS only has one non-zero level -- `0', while RAD has non-zero levels `1', `2', and `7'. For detailed fitted curves of numerical features and coefficients of categorical features, please refer to Figure \ref{fig:boston_fitted}.

\begin{figure}[!ht]
\centering
{\includegraphics[width=\textwidth]{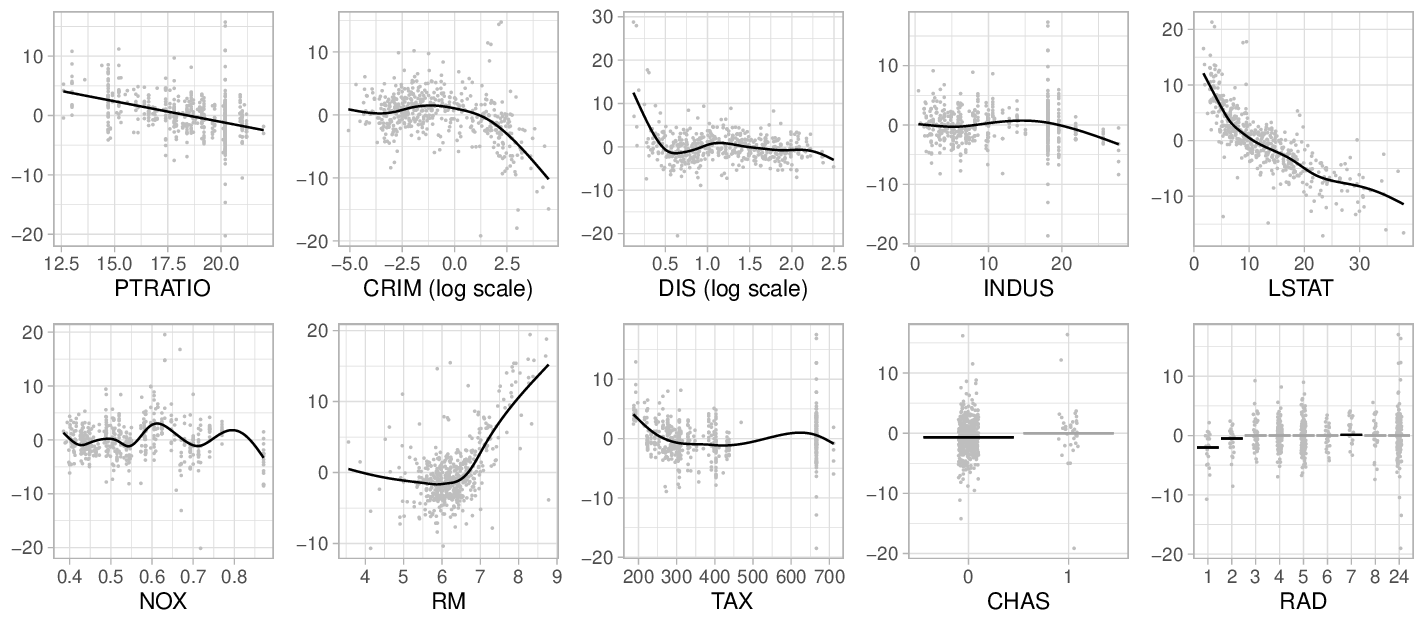}}
\caption[short]{Estimated non-zero $f_j$ for each feature in the Boston Housing Dataset using $\alpha_{0.15se}$. A grey dot on the y-axis represents the response value after removing the fitted value for the rest of the features. For categorical features CHAS and RAD, the y-axis is shifted by the fitted value of the level with a zero coefficient, a grey line indicates that the fitted coefficient for that level is exactly zero. For the feature PTRATIO, the fitted $f_j$ is exactly linear. ZN, AGE and B are not presented in this plot as our model identifies those as irrelevant features.}
\label{fig:boston_fitted}
\end{figure}

\subsection{Fitting the Smoothness}
\label{sec:smoothness}
In this section, we illustrate that our method can find the proper smoothness for every feature using simulated data and compare it against other methods including SPAM and GAMSEL.
\subsubsection{One Feature Case}
To start, let us consider the following simple dataset with only one feature,
$$y_i = f(x_i) + \epsilon_i, \quad i = 1, \cdots, n$$
Here, the sample size is $n = 300$, $\epsilon_i$ are sampled from a standard Gaussian distribution, and $x_i$ are uniformly sampled from the interval $(0,1)$. The ground truth function $f$ is defined as:
\begin{equation}
\label{eq:smoothness_1d_ground_truth}
    f(x) = e^x\cdot\sin\{13(x-0.23)^2\} + 5x, \quad x \in (0,1).
\end{equation}
We conducted experiments using SPAM, GAMSEL, and our VARD model on a simulated dataset with varying penalty hyperparameters ($\lambda$ for SPAM and GAMSEL, $\alpha$ for VARD). To ensure a fair comparison, we kept the number of basis spline functions consistent across all models, with parameters set as follows: p = 25 in SPAM, dim = 25 in GAMSEL, and 25 knots in our VARD model. For our model VARD, we use natural cubic spline basis. For illustrating the effect of $\lambda$ on smoothness in GAMSEL, we set the degree of freedom to the maximum of 25. We used the default $\gamma = 0.4$ for GAMSEL. The results are visualized in Figure \ref{fig:smoothness_1d}, where the penalty hyperparameter gradually increases from left to right.

In Figure \ref{fig:smoothness_1d}, subplots with dashed borders represent the fitted curve of $\lambda_{min}$ or $\alpha_{min}$ for different methods in ten-fold cross-validation, while subplots with solid borders show the fitted curve of $\lambda_{1se}$ or $\alpha_{0.15se}$ for different methods. For SPAM, we didn't display the best model using the GCV (generalized cross-validation) or Cp metric, as recommended by the authors, dut to two reasons: 1) In one-dimensional cases, it can be proven that the best model according to GCV is always the one with no penalty ($\lambda = 0$), which typically leads to overfitting when the basis number is high; 2) Estimating error variance, required for Cp, was not provided in the SPAM paper.

\begin{figure}[!ht]
\centering
{\includegraphics[width=1.01\textwidth]{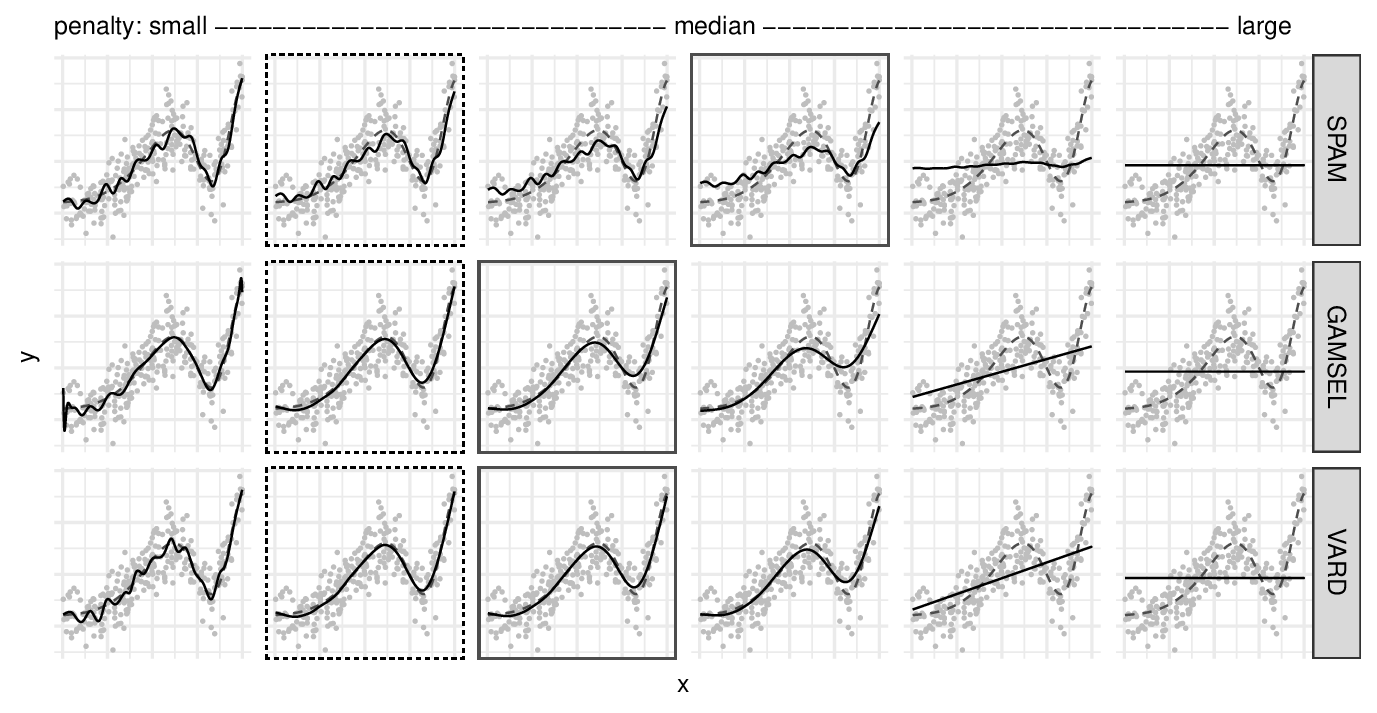}}
\caption[short]{Simulation study on fitting the smoothness for method SPAM, GAMSEL, and VARD in one feature case. The solid curves represent the fitted $\hat{f}(x)$ for each method, with varying penalty parameters ($\lambda$ for SPAM and GAMSEL, $\alpha$ for VARD) increasing from left to right. The dashed curves depict the ground truth $f(x)$ defined in \eqref{eq:smoothness_1d_ground_truth}, while the grey dots indicate the simulated data points ${(x_i, y_i)}_1^n$. In the subplots with dashed borders, we can see the fitted curves for $\lambda_{min}$ (for SPAM and GAMSEL) and $\alpha_{min}$ (for VARD) as determined through ten-fold cross-validation. In the subplots with solid borders, the fitted curves correspond to $\lambda_{1se}$ (for SPAM and GAMSEL) and $\alpha_{0.15se}$ (for VARD). For all three methods, the fitted curves in the rightmost column are exactly zero. In the second-rightmost column, both GAMSEL and VARD produce fitted curves that are exactly linear.}
\label{fig:smoothness_1d}
\end{figure}
From Figure \ref{fig:smoothness_1d} we observe that:
\begin{itemize}
    \item The SPAM method exhibits the capability to achieve exact sparsity but lacks any significant smoothing effect. It's worth noting how the shape of the fitted curve remains mostly unchanged even for relatively large penalty values. This behavior is expected because SPAM penalizes the magnitude of the function rather than its roughness (see Table \ref{tb:regularization_methods}). Mathematically, SPAM is equivalent to shrinking the least square estimator by a multiplier when there's only one feature \citep{ravikumar2009}. Essentially, this implies that if the number of basis spline functions isn't initially set to match the smoothness of the true function, SPAM will be unable to adapt its smoothness to the truth, regardless of how the penalty parameter $\lambda$ is tuned.
    
    \item For both GAMSEL and our method VARD, when the penalty parameter is small, the fitted curves exhibit a high degree of oscillation. As the penalty parameter increases, the fitted curves become smoother, eventually reaching exact linearity and, ultimately, exact sparsity. Both $\lambda_{min}$ and $\lambda_{1se}$ for GAMSEL, as well as $\alpha_{min}$ and $\alpha_{0.15se}$ for VARD, closely match the ground truth. Both GAMSEL and VARD have the ability to capture the correct smoothness in the one-feature case, provided that the models are initialized with sufficient complexity (i.e., a large enough number of basis functions and degrees of freedom for GAMSE, or a large enough number of basis functions for VARD).

    \item The SPAM method does not exhibit an exact linear phase, but both GAMSEL and VARD can distinguish whether a function is exactly zero, linear, or non-linear.
\end{itemize}
\subsubsection{Multiple Feature Case}
\label{sec:Multiple_Feature_Case}
While SPAM evidently falls short in terms of fitting smoothness, GAMSEL performs adequately and closely competes with VARD, in single-feature scenarios. However, as we will demonstrate next, GAMSEL encounters challenges in precisely determining the appropriate smoothness for each feature individually in multi-feature contexts. This issue stems from the influence of the sparsity-inducing hyperparameter $\lambda$ in GAMSEL, which has a global impact on the smoothness of all features (see Table \ref{tb:regularization_methods}). In contrast,  VARD excels in achieving the appropriate smoothness for multiple features. This is because our objective function  \eqref{eq:elbo_alter_2} learns the smoothness of each feature independently and data-adaptively using $r_1^2, \cdots, r_p^2$. 

To illustrate this point, we employ a simulation study using a dataset generated as follows
$$
y = 1 + f_1(x_{i1}) + f_2(x_{i2}) + f_3(x_{i3}) + f_4(x_{i4}) + f_5(x_{i5}) + \epsilon_i.$$
Here, $y_i$ represents the response variable, and $(x_{i1}, \dots, x_{i5})$ denotes the five predictors, each independently sampled from a uniform distribution in the range $(-1, 1)$.  Errors $\epsilon_i$'s are generated as iid samples from $N(0, 1)$. The sample size is  $n = 500$. 

The ground truth functions are defined as follows,
\begin{equation}
\label{eq:smoothness_md_ground_truth}
    f_1(x) = 8\sin(12x),\quad
    f_2(x) = 3x^3,\quad
    f_3(x) = -2x,\quad
    f_4(x) = 0,\quad
    f_5(x) = 0.
\end{equation}
As we can notice from their definitions,  $f_1$ to $f_5$ have varying level of smoothness. $f_1$ and $f_2$ are nonlinear, $f_3$ is linear, and $f_4$ and $f_5$ are zero functions. We tested GAMSEL and VARD on this dataset and plotted the fitted functions $\hat{f}_1$ to $\hat{f}_5$ in Figure \ref{fig:smoothness_md}. To ensure a fair comparison, for GAMSEL, we set $dim = df = 25$ and used $\gamma = 0.4$. For VARD, we used 25 knots and the natural cubic spline basis. We evaluated both $\lambda_{min}$ and $\lambda_{1se}$ for GAMSEL, as recommended by the authors. In our simulations, we found that $\lambda_{1se}$ performed slightly better than $\lambda_{min}$.  In Figure \ref{fig:smoothness_md}, we present the results of $\lambda_{1se}$ for GAMSEL and the results of $\alpha_{0.15se}$ for VARD.

\begin{figure}[!ht]
\centering
{\includegraphics[width=\textwidth]{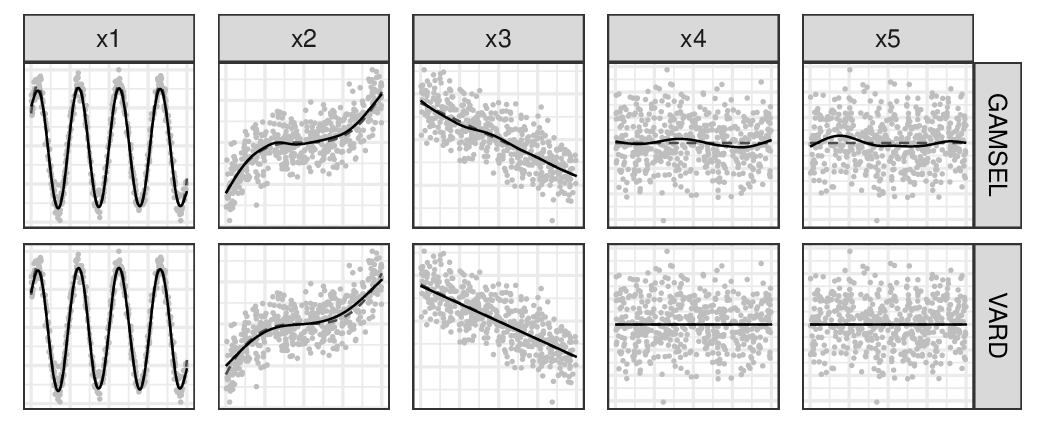}}
\caption[short]{Simulation study on fitting the smoothness for method GAMSEL and VARD in multi-feature case. The solid curves represent the fitted functions $\hat{f_j}$ for each method, while the dashed curves depict the ground truth functions $f_j$ as defined in equation \eqref{eq:smoothness_md_ground_truth}. The grey dots represent the simulated data points, marginally given by $\{(x_{ij},f_j(x_{ij}) + \epsilon_i)\}_{i=1}^n$. VARD, with $\alpha_{0.15se}$ as the tuning parameter, achieves an exact linear fit for $f_3$ and exact zero fits for $f_4$ and $f_5$. However, GAMSEL, with $\lambda_{1se}$, struggles to individually asses the smoothness.} 
\label{fig:smoothness_md}
\end{figure}

From Figure \ref{fig:smoothness_md}, we observe that:
\begin{itemize}
    \item GAMSEL struggles to ascertain the right individual smoothness for each feature despite hyperparameter tuning. It struggles to precisely identify $f_3$ as being purely linear and $f_4$ and $f_5$ as being strictly zero functions.  This difficulty arises from the global penalty effect embedded within GAMSEL's objective. In situations where there are both features with highly oscillatory functions (e.g., $f_1$) and extremely smooth features (e.g., $f_3$, $f_4$, and $f_5$), GAMSEL faces an inherent trade-off. Ultimately, GAMSEL tends to favor the oscillatory characteristics of $f_1$, leading to oscillations in the fitted curves of other features.
    
    \item VARD, on the other hand, excels in effectively tailoring the smoothness for each feature. We can accurately recognize $f_3$ as genuinely linear, and $f_4$ and $f_5$ as precisely zero functions. This is again a testament to its ability to learn each feature's smoothness, parameterized by $r_j^2$ in \eqref{eq:elbo_alter_2}, individually and adaptively.
\end{itemize}

In summary, the studies comparing single-feature and multi-feature cases reveal distinct performance differences among the methods. SPAM lacks the capability to fit smoothness effectively, and GAMSEL encounters challenges in accommodating the smoothness of each feature separately. Whereas VARD demonstrates superiority in accurately capturing the right smoothness for all features.

\subsection{Performance Comparison}
\label{sec:performance_comparison}

In this section, we present the results of our simulation study, conducted under various data generation setups, to compare the performance of VARD with SPAM and GAMSEL in terms of both \textit{estimation and selection accuracy}. As SPAM cannot distinguish between linear and non-linear functions, while VARD and GAMSEL can, we designed two experiments to evaluate their capabilities. In Experiment 1, we assess the selection accuracy of identifying non-zero functions versus zero functions for all three methods. In Experiment 2, we focus on comparing the selection accuracy between GAMSEL and VARD in distinguishing between zero functions, linear functions, and non-linear functions. Additionally, we compare the estimation accuracy across all three methods in the first experiment. For both experiments, we explore several cases of different data-generating processes by varying the following aspects of additive model \eqref{eq:additive_model_regression} in Table \ref{tb:experiment_notation}: 

\begin{table}[ht]
\small
\centering
\caption{Experiment Notations}
\label{tb:experiment_notation}
\begin{tabular}{l|l}
\toprule
$n$: sample size & $p$: number of features \\
\midrule
$s$: number of relevant features & $\sigma^2$: error variance \\
\midrule
$\zeta$: marginal distribution of each feature & $\rho$: correlation between any two features \\
\midrule
$s_n$: number of nonlinear features & $s_l$: number of linear features \\
\bottomrule
\end{tabular}
\end{table}

In Table \ref{tb:experiment_notation}, $s$ is considered in Experiment 1, while  $s_n$ and $s_l$ are considered in Experiment 2. For each relevant feature $x_j$, we assign its true function $f_j$ as one of the following functions multiplied by some non-zero constant:
$$
    \phi_1(x) = 10e^{-4.6x^2}, \quad
    \phi_2(x) = 4\cos(1.7x), \quad
    \phi_3(x) = 5(x + 1.3)^2, \quad
    \phi_4(x) = 6(x + 5).
$$
Notice that $\phi_1, \phi_2, \phi_3$ are nonlinear functions and $\phi_4$ is a linear function. 

For each case, we generated 100 datasets and ran SPAM, GAMSEL, and VARD. We report the averages and standard deviations of each experiment's target metrics. As before, to ensure a fair comparison, we kept the number of basis spline functions the same across all models. Specifically, we set the parameter $p = 10$ for SPAM, $dim = 10$ for GAMSEL, and the number of knots as 10 for VARD. For VARD, we used the natural cubic spline basis. In the case of GAMSEL, we set $df = 6$ and $\gamma = 0.4$.

Regarding SPAM, we did not employ the recommended GCV or Cp criteria for hyperparameter selection, as the authors did not specify how to estimate error variance for computing Cp, and we found that GCV led to suboptimal hyperparameters. Instead, we conducted a 10-fold cross-validation and reported the performance of SPAM using $\lambda_{min}$ and $\lambda_{1se}$. For GAMSEL, we utilized their built-in \texttt{cv.gamsel} function to report the model performance corresponding to $\lambda_{min}$ and $\lambda_{1se}$. For our model, we report the performance corresponding to $\alpha_{min}$ and $\alpha_{0.15se}$.

Finally, we first conducted manual ten-fold cross-validation on the initial dataset. This ensured that the hyperparameter sequences for each model included the minimum or (1 or 0.15)se values. We aimed to confirm that the validation mean squared error (MSE) path showed a U shape and included the desired hyperparameters for all three methods. After confirming this with the first dataset, we used the identified minimum or (1 or 0.15)se hyperparameters for the remaining 99 datasets. This approach ensured consistency across all three methods for all datasets.

\subsubsection{Experiment 1}
In this first experiment, we aim to compare the estimation and selection accuracy of all three methods (SPAM, GAMSEL, and VARD). Selection accuracy in this context refers to the ability to correctly determine whether a feature's function is zero or non-zero. We consider the six data generating cases in Table \ref{tb:experiment_1_setup}, the detailed assignment of non-zero function for each case can be found in Appendix \ref{sec:appendix_experiment_setup}:

\begin{table}[ht]
\centering
\caption{Experiment 1 Setup (Notation defined in Table \ref{tb:experiment_notation})}
\label{tb:experiment_1_setup}
\scalebox{1}{
\begin{tabular}{c|c|c|c|c|c|c}
\toprule
& \textbf{Case 1} & \textbf{Case 2} & \textbf{Case 3} & \textbf{Case 4} & \textbf{Case 5} & \textbf{Case 6} \\
\midrule
$n$ & 500 & 800 & 500 & 1000 & 500 & 500\\ 
$p$ & 10 & 15 & 150 & 1000 & 30 & 30\\
$s$ & 4 & 8 & 3 & 4 & 12 & 12\\
$\sigma^2$ & 1 & 4 & 1 & 1 & 1 & 1\\
$\zeta$ & $\mathcal{U}(-1, 1)$ & $\mathcal{U}(-1, 1)$ & $\mathcal{U}(-1, 1)$ & $\mathcal{U}(-1, 1)$ & $\mathcal{N}(0, 1)$ & $\mathcal{N}(0, 1)$\\
$\rho$ & 0 & 0 & 0 & 0 & 0.3 & 0.7\\
\bottomrule
\end{tabular}
}
\end{table}

To assess estimation accuracy, we employed the in-sample MSE (mean squared error), defined as $\frac{1}{n}\sum_{i=1}^n\sum_{j=1}^p (f_j(x_{ij}) - \hat{f}_j(x_{ij}))^2$, as the metric. To test feature selection accuracy, we treated the relevant features as positives and irrelevant features as negatives, utilizing False Discovery Rate (FDR) and True Positive Rate (TPR) as metrics. For all three methods, the `\textit{min}' hyperparameter is more accurate in estimation accuracy and less accurate in selection accuracy than the `\textit{se}' hyperparameter within each method. Thus, for brevity, when it comes to estimation metric in-sample MSE, we report results of SPAM and GAMSEL with $\lambda_{min}$ and our VARD with $\alpha_{min}$, and when it comes to selection accuracy metrics FDR and TPR, we report results of SPAM and GAMSEL with $\lambda_{1se}$ and our VARD with $\alpha_{0.15se}$. For each case, we provided the averages and standard deviations of these three metrics over 100 generated datasets in tables \ref{tb:estimation_selection_1_4}, \ref{tb:estimation_selection_5_6}. We summarize our observations as follows 
\begin{itemize}
    \item VARD demonstrates superior estimation accuracy across all six cases, consistently achieving the lowest in-sample MSE.
    \item VARD excels in distinguishing between zero and non-zero functions. Along with SPAM, it is among the top methods for feature selection, significantly outperforming GAMSEL and showing comparable performance to each other.
    \item VARD maintains strong performance even in the presence of feature correlations (Table \ref{tb:estimation_selection_5_6}). In contrast, GAMSEL and SPAM both struggle with correlated features.
\end{itemize}

In this experiment, across several situations, we can see that our method VARD has an advantage in both estimation and selection accuracy over SPAM and GAMSEL. In practice, although  VARD(min) achieves slightly better estimation error, we still recommend VARD(0.15se) due to its high accuracy in feature selection.

\begin{table}[H]
\centering
\caption{Estimation and Selection Accuracy Comparison between SPAM, GAMSEL, and VARD (Cases 1-4)}
\label{tb:estimation_selection_1_4}
\scalebox{0.95}{
\begin{tabular}{c|c|c|c|c|c}
\toprule
\textbf{Metric} & \textbf{Method} & \textbf{Case 1} & \textbf{Case 2} & \textbf{Case 3} & \textbf{Case 4} \\
\midrule
\multirow{3}{*}{\hfil \textbf{MSE}} 
& SPAM & 0.164$\pm$0.03 & 0.706$\pm$0.09 & 0.204$\pm$0.04 & 0.146$\pm$0.02 \\
& GAMSEL & 0.102$\pm$0.02 & 3.37$\pm$0.34 & 0.085$\pm$0.02 & 0.347$\pm$0.03 \\
& VARD & \textbf{0.040$\pm$0.01} & \textbf{0.453$\pm$0.07} & \textbf{0.029$\pm$0.01} & \textbf{0.026$\pm$0.01} \\
\midrule
\multirow{3}{*}{\hfil \textbf{FDR}} 
& SPAM & 0$\pm$0 & 0$\pm$0 & 0$\pm$0 & 0$\pm$0 \\
& GAMSEL & 0.467$\pm$0.08 & 0.466$\pm$0.01 & 0$\pm$0 & 0.607$\pm$0.08 \\
& VARD & 0$\pm$0 & 0.018$\pm$0.04 & 0$\pm$0 & 0$\pm$0 \\
\midrule
\textbf{TPR} & \textbf{ALL} & 1$\pm$0 & 1$\pm$0 & 1$\pm$0 & 1$\pm$0 \\
\bottomrule
\end{tabular}
}
\end{table}

\begin{table}[H]
\centering
\caption{Estimation and Selection Accuracy Comparison between SPAM, GAMSEL, and VARD (Cases 5-6)}
\label{tb:estimation_selection_5_6}
\scalebox{0.95}{
\begin{tabular}{c|c|c|c}
\toprule
\textbf{Metric} & \textbf{Method} & \textbf{Case 5} & \textbf{Case 6} \\
\midrule
\multirow{3}{*}{\hfil \textbf{MSE}} 
& SPAM & 0.613$\pm$0.07 & 1.04$\pm$0.17 \\
& GAMSEL & 4.21$\pm$0.47 & 4.16$\pm$0.55 \\
& VARD & \textbf{0.465$\pm$0.08} & \textbf{0.449$\pm$0.09} \\
\midrule
\multirow{3}{*}{\hfil \textbf{FDR}} 
& SPAM & 0.007$\pm$0.03 & 0.108$\pm$0.07 \\
& GAMSEL & 0.587$\pm$0.01 & 0.588$\pm$0.01 \\
& VARD & \textbf{0$\pm$0} & \textbf{0.002$\pm$0.01} \\
\midrule
\textbf{TPR} & {ALL} & 1$\pm$0 & 1$\pm$0 \\
\bottomrule
\end{tabular}
}
\end{table}

\subsubsection{Experiment 2}
In this experiment, we run simulations to compare the power of differentiating nonlinear, linear, and zero functions between GAMSEL and VARD. SPAM is not in this experiment because it can't distinguish linear function vs nonlinear function. We consider the following 5 data generating cases in Table \ref{tb:experiment_2_setup}, the detailed linear and nonlinear function assignments can be found in Appendix \ref{sec:appendix_experiment_setup}:

\begin{table}[ht]
\centering
\caption{Experiment 2 Setup (Notation defined in Table \ref{tb:experiment_notation})}
\label{tb:experiment_2_setup}
\scalebox{1}{
\begin{tabular}{c|c|c|c|c|c}
\toprule
 & \textbf{Case 1} & \textbf{Case 2} & \textbf{Case 3} & \textbf{Case 4} & \textbf{Case 5} \\
\midrule
$n$ & 600 & 2000 & 1000 & 600 & 500 \\ 
$p$ & 18 & 100 & 1200 & 30 & 30 \\
$s_n$ & 6 & 25 & 3 & 10 & 10 \\
$s_l$ & 6 & 25 & 3 & 10 & 10 \\
$\zeta$ & $\mathcal{U}(-1, 1)$ & $\mathcal{U}(-1, 1)$ & $\mathcal{U}(-1, 1)$ & $\mathcal{N}(0, 1)$ & $\mathcal{N}(0, 1)$ \\
$\rho$ & 0 & 0 & 0 & 0.3 & 0.7 \\
\bottomrule
\end{tabular}
}
\end{table}

We focus on the resulting $3 \times 3$ confusion matrix. We report the averages and standard deviations of each entry in the confusion matrix over 100 generated data. The result can be found in Table \ref{tb:nonlinear_linear_zero}. We did not report the result of VARD(min), which has slightly worse selection accuracy than VARD(0.15se),  for brevity purposes. We report both GAMSEL(min), GAMSEL(1se) to illustrate that none of GAMSEL's best choice of hyperparameter can be better than ours.

\begin{table}[ht]
\centering
\caption{Distinguishing Nonlinear, Linear, and Zero Functions (GAMSEL vs VARD)}
\label{tb:nonlinear_linear_zero}
\scalebox{0.68}{
\begin{tabular}{@{\hskip 2pt} c|c|ccc|ccc|ccc @{\hskip 2pt}}
\toprule
\multicolumn{2}{c|}{\multirow{2}{*}{\diagbox{Truth}{Model}}} & \multicolumn{3}{c|}{GAMSEL (min)} & \multicolumn{3}{c|}{GAMSEL (1se)} & \multicolumn{3}{c}{VARD (0.15se)} \\
\cline{3-11}
\multicolumn{2}{c|}{} & nonlinear & linear & zero & nonlinear & linear & zero & nonlinear & linear & zero \\
\midrule
\multirow{3}{*}{Case 1} & nonlinear & 6.0$\pm$0.0 & - & - & 6.0$\pm$0.0 & - & - & \textbf{6.0$\pm$0.0} & - & - \\
& linear & 5.6$\pm$0.6 & 0.4$\pm$0.6 & - & 6.0$\pm$0.0 & - & - & 0.0$\pm$0.1 & \textbf{6.0$\pm$0.1} & - \\
& zero & 5.6$\pm$0.5 & 0.3$\pm$0.5 & 0.1$\pm$0.2 & 5.1$\pm$0.9 & - & 0.9$\pm$0.9 & 0.0$\pm$0.1 & 0.0$\pm$0.1 & \textbf{6.0$\pm$0.2} \\
\midrule
\multirow{3}{*}{Case 2} & nonlinear & 25.0$\pm$0.0 & - & - & 25.0$\pm$0.0 & - & - & \textbf{25.0$\pm$0.0} & - & - \\
& linear & 24.9$\pm$0.3 & 0.1$\pm$0.3 & - & 25.0$\pm$0.0 & - & - & 0.0$\pm$0.1 & \textbf{25.0$\pm$0.1} & - \\
& zero & 49.7$\pm$0.5 & 0.3$\pm$0.5 & - & 49.7$\pm$0.6 & - & 0.3$\pm$0.6 & 0.0$\pm$0.2 & 0.1$\pm$0.3 & \textbf{49.8$\pm$0.4} \\
\midrule
\multirow{3}{*}{Case 3} & nonlinear & 3.0$\pm$0.0 & - & - & 3.0$\pm$0.0 & - & - & \textbf{3.0$\pm$0.0} & - & - \\
& linear & 0.0$\pm$0.2 & 3.0$\pm$0.2 & - & 3.0$\pm$0.0 & - & - & - & \textbf{3.0$\pm$0.0} & - \\
& zero & 5.9$\pm$2.6 & 212.6$\pm$12.0 & 975.6$\pm$12.0 & 3.1$\pm$1.7 & - & 1191.0$\pm$1.7 & 0.3$\pm$0.5 & 0.9$\pm$1.0 & \textbf{1193.0$\pm$1.1} \\
\midrule
\multirow{3}{*}{Case 4} & nonlinear & 10.0$\pm$0.0 & - & - & 10.0$\pm$0.0 & - & - & \textbf{10.0$\pm$0.0} & - & - \\
& linear & 9.3$\pm$0.9 & 0.7$\pm$0.9 & - & 10.0$\pm$0.0 & - & - & 0.0$\pm$0.2 & \textbf{10.0$\pm$0.2} & - \\
& zero & 9.3$\pm$0.8 & 0.6$\pm$0.7 & 0.2$\pm$0.4 & 6.3$\pm$1.6 & - & 3.7$\pm$1.6 & 0.0$\pm$0.1 & 0.0$\pm$0.1 & \textbf{10.0$\pm$0.2} \\
\midrule
\multirow{3}{*}{Case 5} & nonlinear & 10.0$\pm$0.0 & - & - & 10.0$\pm$0.0 & - & - & \textbf{10.0$\pm$0.0} & - & - \\
& linear & 9.3$\pm$0.8 & 0.7$\pm$0.8 & - & 10.0$\pm$0.0 & - & - & 0.0$\pm$0.1 & \textbf{10.0$\pm$0.1} & - \\
& zero & 9.1$\pm$1.0 & 0.5$\pm$0.8 & 0.4$\pm$0.6 & 6.0$\pm$1.4 & - & 4.0$\pm$1.4 & 0.0$\pm$0.1 & - & \textbf{10.0$\pm$0.1} \\
\bottomrule
\end{tabular}
}
\end{table}

Table \ref{tb:nonlinear_linear_zero} reveals several key trends across all cases. GAMSEL(min) often misidentifies zero functions as non-zero and frequently struggles with linear functions. In comparison, GAMSEL(1se) performs better at recognizing whether a function is zero or not but struggles to distinguish between non-zero functions that are linear or nonlinear. In contrast, our approach, VARD(0.15se), excels in accurately classifying functions as nonlinear, linear, or zero in all five cases.

As discussed in Section \ref{sec:Multiple_Feature_Case}, the challenges faced by GAMSEL, particularly in complex scenarios with multiple features. GAMSEL applies a uniform penalty to the smoothness of all features, thus lacking the ability to adapt to the individual smoothness of each feature. Conversely, our method overcomes this limitation, allowing it to perform effectively in such scenarios.

\section{Conclusion}
In this paaper, we introduced a Variational Automatic Relevance Determination (VARD) framework for additive models. This framework automatically detects the smoothness of each feature while achieving exact sparsity, thereby classifying a feature's influence on the response variable as zero, linear, or nonlinear. To optimize the framework's objective function, we designed a coordinate descent algorithm that naturally attains exact sparsity at each iteration. Through empirical study, we established the algorithm's superiority over existing sparsity-inducing algorithms like SPAM and GAMSEL. Specifically, our method outperforms these benchmarks in estimation accuracy, selection accuracy, and power of distinguishing between zero, linear, and nonlinear feature contributions. VARD further boasts the simplicity of having to tune a single hyperparameter, much like the lasso method. Overall, our results demonstrate that our framework is a robust and efficient methodology for feature selection and smoothness learning simultaneously. 

To guide our future research efforts, we aim to bridge the gap between the theoretical advancements presented in this paper and their practical application in real-world scenarios. Specifically, we intend to explore scenarios where the number of features significantly exceeds that of observations. While this exploration is beyond the scope of our current work due to its substantial computational demands, we acknowledge the importance of addressing this challenge.

In our algorithm, despite its effectiveness in managing individual smoothing and selection, we still rely on grid search for optimization in some parts. This reliance creates a bottleneck when dealing with cases involving a large number of features. In future iterations of our research, we aspire to develop more efficient optimization techniques to overcome this limitation.

Additionally, we see an opportunity to enhance our method by providing theoretical guarantees regarding the accuracy of feature selection. The innovative variational approach we've introduced to regularize and optimize additive models stands out for its novelty, efficiency, and remarkable accuracy. We are confident that by extending our theoretical framework and conducting more extensive practical experiments, our approach has the potential to achieve `state-of-the-art' status in the realm of additive models.
\newpage
\bibliographystyle{apalike}
\bibliography{ref2}

\begin{thebibliography}{}

\bibitem[Bishop, 2006]{bishop:2006:PRML}
Bishop, C.~M. (2006).
\newblock {\em Pattern recognition and machine learning}.
\newblock Springer New York.

\bibitem[Chouldechova and Hastie, 2015]{alex2015}
Chouldechova, A. and Hastie, T. (2015).
\newblock Generalized additive model selection.
\newblock {\em arXiv no. 1506.03850}.

\bibitem[Chouldechova et~al., 2022]{GAMSEL_R_Package}
Chouldechova, A., Hastie, T., Narasimhan, B., and Spinu, V. (2022).
\newblock {\em R package gamsel: Fit Regularization Path for Generalized Additive Models}.

\bibitem[Fabian~Scheipl and Kneib, 2012]{scheipl2012}
Fabian~Scheipl, L.~F. and Kneib, T. (2012).
\newblock Spike-and-slab priors for function selection in structured additive regression models.
\newblock {\em Journal of the American Statistical Association}, 107(500):1518--1532.

\bibitem[Friedman et~al., 2010]{glmnet}
Friedman, J.~H., Hastie, T., and Tibshirani, R. (2010).
\newblock Regularization paths for generalized linear models via coordinate descent.
\newblock {\em Journal of Statistical Software}, 33(1):1–22.

\bibitem[He and Wand, 2022]{he2022bayesian}
He, V.~X. and Wand, M.~P. (2022).
\newblock Bayesian generalized additive model selection including a fast variational option.
\newblock {\em arXiv no. 2201.00412}.

\bibitem[Higgins et~al., 2017]{higgins2017betavae}
Higgins, I., Matthey, L., Pal, A., Burgess, C., Glorot, X., Botvinick, M., Mohamed, S., and Lerchner, A. (2017).
\newblock beta-{VAE}: Learning basic visual concepts with a constrained variational framework.
\newblock In {\em Proceedings of the 5th International Conference on Learning Representations (ICLR)}.

\bibitem[Hoerl and Kennard, 1970]{RIDGE}
Hoerl, A.~E. and Kennard, R.~W. (1970).
\newblock Ridge regression: Biased estimation for nonorthogonal problems.
\newblock {\em Technometrics}, 12(1):55--67.

\bibitem[Jiang et~al., 2021]{SAM_R_Package}
Jiang, H., Ma, Y., Liu, H., Roeder, K., Li, X., and Zhao, T. (2021).
\newblock {\em R package sam: Sparse Additive Modelling}.

\bibitem[Lin and Zhang, 2006]{lin2006}
Lin, Y. and Zhang, H.~H. (2006).
\newblock {Component selection and smoothing in multivariate nonparametric regression}.
\newblock {\em The Annals of Statistics}, 34(5):2272 -- 2297.

\bibitem[Liu et~al., 2007]{ravikumar2007spam}
Liu, H., Wasserman, L., Lafferty, J., and Ravikumar, P. (2007).
\newblock Spam: Sparse additive models.
\newblock In {\em Advances in Neural Information Processing Systems}, volume~20. Curran Associates, Inc.

\bibitem[Mackay, 1995]{mackay1995probable}
Mackay, D. J.~C. (1995).
\newblock Probable networks and plausible predictions — a review of practical bayesian methods for supervised neural networks.
\newblock {\em Network: Computation in Neural Systems}, 6(3):469--505.

\bibitem[Marra and Wood, 2011]{marra2011}
Marra, G. and Wood, S.~N. (2011).
\newblock Practical variable selection for generalized additive models.
\newblock {\em Computational Statistics \& Data Analysis}, 55(7):2372--2387.

\bibitem[Neal, 1996]{neal1996bayesian}
Neal, R.~M. (1996).
\newblock {\em Bayesian Learning for Neural Networks}.
\newblock Springer New York.

\bibitem[Ravikumar et~al., 2009]{ravikumar2009}
Ravikumar, P., Lafferty, J., Liu, H., and Wasserman, L. (2009).
\newblock Sparse additive models.
\newblock {\em Journal of the Royal Statistical Society: Series B (Statistical Methodology)}, 71(5):1009--1030.

\bibitem[Ray and Szabó, 2022]{ray2021variational}
Ray, K. and Szabó, B. (2022).
\newblock Variational bayes for high-dimensional linear regression with sparse priors.
\newblock {\em Journal of the American Statistical Association}, 117(539):1270--1281.

\bibitem[Tipping and Faul, 2003]{tipping2003fast}
Tipping, M.~E. and Faul, A.~C. (2003).
\newblock Fast marginal likelihood maximisation for sparse bayesian models.
\newblock In {\em Proceedings of the 9th International Workshop on Artificial Intelligence and Statistics}, pages 276--283.

\bibitem[Wahba, 1990]{wahba1990spline}
Wahba, G. (1990).
\newblock {\em Spline Models for Observational Data}.
\newblock Society for Industrial and Applied Mathematics.

\bibitem[Wipf and Nagarajan, 2007]{wipf2007new}
Wipf, D. and Nagarajan, S. (2007).
\newblock A new view of automatic relevance determination.
\newblock In {\em Advances in Neural Information Processing Systems}, volume~20. Curran Associates, Inc.

\bibitem[Wood, 2000]{wood2000}
Wood, S.~N. (2000).
\newblock Modelling and smoothing parameter estimation with multiple quadratic penalties.
\newblock {\em Journal of the Royal Statistical Society: Series B (Statistical Methodology)}, 62(2):413--428.

\bibitem[Wood, 2004]{Wood2004}
Wood, S.~N. (2004).
\newblock Stable and efficient multiple smoothing parameter estimation for generalized additive models.
\newblock {\em Journal of the American Statistical Association}, 99(467):673--686.

\bibitem[Wood, 2008]{wood2008}
Wood, S.~N. (2008).
\newblock {Fast Stable Direct Fitting and Smoothness Selection for Generalized Additive Models}.
\newblock {\em Journal of the Royal Statistical Society Series B: Statistical Methodology}, 70(3):495--518.

\bibitem[Wood, 2010]{wood2010}
Wood, S.~N. (2010).
\newblock {Fast Stable Restricted Maximum Likelihood and Marginal Likelihood Estimation of Semiparametric Generalized Linear Models}.
\newblock {\em Journal of the Royal Statistical Society Series B: Statistical Methodology}, 73(1):3--36.

\bibitem[Wood, 2017]{woodbook}
Wood, S.~N. (2017).
\newblock {\em Generalized Additive Models: An Introduction with R, Second Edition}.
\newblock Chapman and Hall/CRC.

\bibitem[Wood et~al., 2016]{wood2016}
Wood, S.~N., Pya, N., and Säfken, B. (2016).
\newblock Smoothing parameter and model selection for general smooth models.
\newblock {\em Journal of the American Statistical Association}, 111(516):1548--1563.

\bibitem[Yang and Martin, 2020]{yang2020variational}
Yang, Y. and Martin, R. (2020).
\newblock Variational approximations of empirical bayes posteriors in high-dimensional linear models.
\newblock {\em arXiv no. 2007.15930}.

\bibitem[Yang et~al., 2020]{yang2020alpha}
Yang, Y., Pati, D., and Bhattacharya, A. (2020).
\newblock {$\alpha $-variational inference with statistical guarantees}.
\newblock {\em The Annals of Statistics}, 48(2):886 -- 905.

\bibitem[Yuan and Lin, 2006]{yuan2006model}
Yuan, M. and Lin, Y. (2006).
\newblock Model selection and estimation in regression with grouped variables.
\newblock {\em Journal of the Royal Statistical Society: Series B (Statistical Methodology)}, 68(1):49--67.

\bibitem[Zhang et~al., 2023]{COSSO_R_Package}
Zhang, H.~H., Lin, C.-Y., and Ray, I. (2023).
\newblock {\em R package Cosso: Fit Regularized Nonparametric Regression Models Using COSSO Penalty}.

\end{thebibliography}


\begin{thebibliography}{}

\bibitem[\protect\citeauthoryear{Campbell and Austin}{Campbell and
  Austin}{2002}]{Campbell02}
Campbell, J.~I. and S.~Austin (2002).
\newblock Effects of response time deadlines on adults' strategy choices for
  simple addition.
\newblock {\em Memory \& Cognition\/}~{\em 30\/}(6), 988--994.

\bibitem[\protect\citeauthoryear{Chi, Feltovich, and Glaser}{Chi
  et~al.}{1981}]{Chi81}
Chi, M.~T., P.~J. Feltovich, and R.~Glaser (1981).
\newblock Categorization and representation of physics problems by experts and
  novices.
\newblock {\em Cognitive science\/}~{\em 5\/}(2), 121--152.

\bibitem[\protect\citeauthoryear{Schubert, Denmark, Crandall, Grome, and
  Pappas}{Schubert et~al.}{2013}]{Schubert13}
Schubert, C.~C., T.~K. Denmark, B.~Crandall, A.~Grome, and J.~Pappas (2013).
\newblock Characterizing novice-expert differences in macrocognition: an
  exploratory study of cognitive work in the emergency department.
\newblock {\em Annals of emergency medicine\/}~{\em 61\/}(1), 96--109.

\end{thebibliography}

\newpage

\appendixtitleon
\appendixtitletocon

\setcounter{page}{1}
\begin{center}
{\Large Supplementary Materials for Bayesian Smoothing and Feature Selection using
Variational Automatic Relevance Determination
}
\end{center}

\begin{appendices}

\section{}
\label{sec:appendix_regularization_review}
In this part of the appendix, we provide a detailed review on the four regularization methods mentioned in Section \ref{sec:regularization_methods}: smoothing spline, COSSO, SPAM and GAMSEL. The objective functions of all these methods  take the following form:
\begin{equation*}
\text{Objective}(f) = \text{RSS}(f) + \text{Penalty}(f)
\end{equation*}
where $\text{RSS}(f)$ represents the residual sum of squares, given by $\|\mathbf{y} - \sum_{j=1}^p f_j(\bm{x}_j)\|^2$ and $\text{Penalty}(f)$ is composed of ridge or group lasso: 
\begin{equation*}
    Penalty(f) =  \underbrace{\sum_{j=1}^p \lambda_jJ_R(f_j)}_\text{Ridge} + \underbrace{\lambda\sum_{j=1}^pJ_L(f_j)}_\text{Group Lasso}.
\end{equation*}

\paragraph{Smoothing spline} In the context of the additive model, the classical smoothing spline objective consists of the residual sum of squares and a penalty term on the roughness of each $f_j$, quantified as the integration of the square of the curve's second-order derivative.
\begin{equation}
\label{eq:smoothing_spline_obj_appendix} 
\underset{\{f_j\}_{j=1}^p}{\argmin} \|\bm{y} - \sum_{j=1}^pf_j(\bm{x}_j)\|_2^2 + \underbrace{\sum_{j=1}^p\lambda_j \underbrace{\int f_j^{''}(t)^2dt}_{J_R(f_j)}}_{Ridge}
\end{equation}
Here $\{\lambda_j\}_1^p$ is a set of non-negative penalty parameters that control the smoothness of each $f_j$.

To understand why the penalty part takes on a ridge form, it can be demonstrated that the optimal $\hat{f}_j$ must be a natural cubic spline with $n$ knots at each data point of $\bm{x}_j$ (Wahba, 1990). In practice, for computational efficiency, it is also common to directly set $f_j$ as a natural cubic spline of the form \eqref{eq:ncs_basis_representation} with fewer knots ($d_j + 2$) than $n$:
\begin{equation}
\label{eq:ncs_basis_representation}
  f_j(x) = \beta_0 + x \beta_{j0} + \sum_{k=1}^{d_j} h_{jk}(x)\beta_{jk}.
\end{equation}
Here, $\{h_{jk}\}_{k=1}^{d_j}$  represents the set of $d_j$ non-linear natural cubic spline basis functions, $\{\beta_{jk}\}_{k=1}^{d_j}$ are the coefficients corresponding to these non-linear basis functions, $\beta_0$ is the intercept term, and $\beta_{j0}$  is the coefficient for the linear term of the $j$-th feature. By consolidating the intercept terms from all $f_j$ into a single term, $\beta_0$, problem \eqref{eq:smoothing_spline_obj_appendix} is transformed into a ridge-like problem as follows: 
\begin{equation}
\label{eq:smoothing_spline_matrix_obj}
    \underset{\beta_0, \{\beta_{j1}\}_1^p,\{\bm{\beta}_j\}_1^p}{\argmin} \|\bm{y} - \beta_0 - \sum_{j=1}^p \bm{x}_j\beta_{j0} -  \sum_{j=1}^p H_j\bm{\beta}_j\|_2^2 + \sum_{j=1}^p \lambda_j \underbrace{\bm{\beta}_j^T S_j \bm{\beta}_j}_{J_R(f_j)}.
\end{equation}
In this equation, $\bm{\beta}_j = (\beta_{j1}, \cdots, \beta_{jd_j})$, $H_j = [h_{jk}(x_{ij})]_{i=1, k=1}^{n, d_j}$ is the basis matrix for the $j$-th feature (without intercept and linear term), and $S_j = [\int h^{''}_{jk_1}(x)h^{''}_{jk_2}(x)\mathrm{d}x]_{k_1=1,k_2=1}^{d_j,d_j}$ is the smoothing matrix for the $j$-th feature. 

While the smoothing spline method laid the foundation for representing the roughness of each $f_j$ as  $\int f_j^{''}(t)^2dt$ in the additive model, a practical challenge remained in how to effectively tune the numerous smoothness parameters $\lambda_j$, particularly when dealing with high dimensions. This challenge was elegantly addressed by \cite{woodbook}, who proposed a method for estimating each $f_j$'s smoothness parameter $\lambda_j$ by optimizing a Generalized Cross Validation or Restricted (or Residual) Maximum Likelihood objective within the framework of smoothing splines. Additionally, they introduced the \texttt{mgcv} R package, which is considered the state-of-the-art tool for controlling smoothness. Despite the significant success of \texttt{mgcv} in regulating smoothness using smoothing splines, this approach encounters difficulties in feature selection for two primary reasons: 1) it imposes no penalty on linear terms, and 2) the solution cannot reach exact sparsity. The first issue can be readily addressed by introducing additional ridge-like penalty terms for linear coefficients into objective \eqref{eq:smoothing_spline_matrix_obj}, as recommended by \cite{marra2011}. However, the second issue lacks an easy solution due to the inherent nature of ridge penalties: regardless of how large $\lambda_j$ we use, it will never force the corresponding coefficient to exactly zero.


\paragraph{COSSO} \cite{lin2006} proposed the COSSO method by penalizing the norm of a curve defined based on all derivatives of the curve from zero to $\ell$-th order. In the context of the additive model, the objective function for COSSO is given by:
\begin{equation}
\label{eq:COSSO_obj}
\underset{\{f_j\}_{j=1}^p}{\argmin}  \|\bm{y} - \sum_{j=1}^pf_j(\bm{x}_j)\|_2^2 + \underbrace{\lambda\sum_{j=1}^p J_L(f_j)}_{\text{Group Lasso}}
\end{equation}
where $J_L(f_j)$ is the norm of $f_j$ defined as,
$$J_L(f_j) := \sqrt{\sum_{\upsilon = 0}^{\ell - 1}\{\int f_j^{(\upsilon)}(t)dt\}^2 + \int\{f_j^{(\ell)}(t)\}^2dt}$$
and $\lambda$ is the non-negative group lasso penalty parameter. This norm measures the roughness of $f_j$ by taking every derivative of $f_j$ up to $\ell$-th order into consideration. 
The COSSO objective takes a group lasso form as follows,
\begin{equation*}
\underset{\{\bm{\beta}_j\}_{j=1}^p}{\argmin} \|\bm{y} - \sum_{j=1}^p H_j\bm{\beta}_j\|_2^2 + \lambda\sum_{j=1}^p \sqrt{\bm{\beta}_j^T S_j\bm{\beta}_j}
\end{equation*}
where $H_j = [h_{jk}(x_{ij})]_{i=1, k=1}^{n, d}$ is the basis matrix for the $j$-th feature, $\bm{\beta}_j = (\beta_{j1}, \cdots, \beta_{jk})$ is the coefficient vector for the $j$-th feature, and $S_j = [\sum_{\upsilon=0}^{\ell -1}\int h_{jk_1}^{(\upsilon)}(x)\mathrm{d}x\int h_{jk_2}^{(\upsilon)}(x)\mathrm{d}x + \int h^{(\ell)}_{jk_1}(x) h^{(\ell)}_{jk_2}(x)\mathrm{d}x]_{k_1=1,k_2=1}^{d_j,d_j}$ is the COSSO smoothing matrix for the $j$-th feature. Although COSSO achieves precise sparsity using group lasso and penalizes the roughness of each $f_j$ through its norm $J(f_j)$, it differs from smoothing splines in that the smoothness of every $f_j$ is regulated by a single, universal penalty parameter, $\lambda$. This characteristic causes the method to struggle to adapt to varying smoothness levels for different features.

\paragraph{SPAM} \cite{ravikumar2007spam} and \cite{ravikumar2009} proposed the SPAM method, wherein the objective function takes a similar form to that of COSSO's objective; see equation \eqref{eq:COSSO_obj}. Here, however,  the norm $J(f_j)$ is defined as the square root of the expectation of the square of $f_j$:
$$J(f_j):= \sqrt{E(f_j^2(x_j))}$$
In practice, the expectation is replaced by its empirical counterpart:
$$J(f_j):= \sqrt{\frac{1}{n}\sum_{i=1}^n f_j^2(x_{ij})}$$
It is important to note that the SPAM norm is not a measure of the roughness of $f_j$ but rather a simple assessment of the magnitude of $f_j$. Again, by expressing $f_j$ in terms of basis functions as shown in equation \eqref{eq:ncs_basis_representation}, the SPAM objective adopts a group lasso form:
\begin{equation}
\label{eq:SPAM_matrix_obj} 
\underset{\{\bm{\beta}_j\}_{j=1}^p}{\argmin} \|\bm{y} - \sum_{j=1}^p H_j\bm{\beta}_j\|_2^2 + \lambda\sum_{j=1}^p \sqrt{\bm{\beta}_j^T\frac{H_j^T H_j}{n}\bm{\beta}_j}.
\end{equation}
Like COSSO, SPAM achieves exact sparsity due to its group lasso objective. Although successful experiments with SPAM for feature selection have been reported, it lacks smoothing capabilities as it does not penalize the roughness of each $f_j$ (its definition of $f_j$ does not measure roughness). In our experiments, we observed that a larger group lasso parameter $\lambda$ in SPAM does not encourage a smoother $\hat{f}_j$ (see Figure \ref{fig:smoothness_1d}). Instead, a larger $\lambda$ merely compresses $\hat{f}_j$ along the y-axis.

\paragraph{GAMSEL} \cite{alex2015} proposed the GAMSEL method with an objective derived from the matrix representation of smoothing spline objective in Equation \eqref{eq:smoothing_spline_matrix_obj} by adding a group lasso penalty term as follows:
\begin{eqnarray}
  &&  \underset{\beta_0, \{\beta_{j0}\}_1^p,\{\bm{\beta}_j\}_1^p}{\argmin}    \|\bm{y} - \beta_0 - \sum_{j=1}^p \bm{x}_j\beta_{j0} -  \sum_{j=1}^p H_j\bm{\beta}_j\|_2^2 + \nonumber \\
   && \quad \qquad \underbrace{\lambda\sum_{j=1}^p\underbrace{\left\{\gamma|\beta_{j0}| + (1-\gamma)\sqrt{\bm{\beta}_j^TS_j^*\bm{\beta}_j}\right\}}_{J_L(f_j)}}_{\text{Group Lasso}} + \underbrace{\frac{1}{2}\sum_{j=1}^p \lambda_j \underbrace{\bm{\beta}_j^T S_j \bm{\beta}_j}_{J_R(f_j)}}_{\text{Ridge}}   \label{eq:gamsel_obj}
\end{eqnarray}
where $\lambda$ is the sparsity-promoting group lasso penalty parameter that can be tuned using cross-validation, $\gamma \in (0, 1)$ is a pre-specified hyperparameter to balance the lasso penalty between linear and nonlinear terms, and $\lambda_1, \cdots, \lambda_p$ are $p$ pre-specified smoothing parameters. To be precise, the definitions of $H_j$, $\bm{\theta}_j$, and $S_j$ in Equation \eqref{eq:gamsel_obj} differ slightly from those in Equation \eqref{eq:smoothing_spline_matrix_obj}. In Equation \eqref{eq:gamsel_obj}, $H_j$ and $\bm{\beta}_j$ include linear terms, and $S_j$ has an additional all-zero row and column corresponding to the linear term. $S_j^*$ is identical to $S_j$, except that the diagonal position corresponding to the linear term is replaced from zero to one. For further details, readers can refer to GAMSEL \cite{alex2015}. 

Although GAMSEL achieves exact sparsity by incorporating a group lasso penalty term, it differs from the smoothing spline method, where smoothing parameters $\lambda_1, \cdots, \lambda_p$ can be learned directly by optimizing a GCV or REML function. In GAMSEL, these smoothness parameters must be pre-specified manually, which is impractical in high-dimensional settings. Moreover, since the group lasso penalty term in GAMSEL originates from the smoothing spline ridge-like penalty term, which corresponds to a roughness measurement for each $f_j$, the group lasso parameter $\lambda$ in GAMSEL also affects the smoothness of every $f_j$ universally. This makes it even more challenging to determine the appropriate smoothness for each $f_j$ in practice.

\section{}
\label{sec:appendix_standardization}
In this part of appendix, for centered feature $\bm{x}_j = (x_{1j}, \cdots, x_{nj})^T$, with basis expansion of form:
$$f_j(x) = \beta_{j0}x + \sum_{k=1}^{d_j}\beta_{jk}h_{jk}(x)
$$
where the nonlinear basis $\{h_{jk}\}_{k=1}^{d_j}$ is centered (basis basis matrix $H_j:= [h_{jk}(x_{ij})]_{i=1, k=1}^{n, d_j}$ has column mean zero), we provide a standardization procedure on nonlinear basis $\{h_{jk}\}_{k=1}^{d_j}$, such that after standardization, $\{h_{jk}\}_{k=1}^{d_j}$ satisfies the following properties:
\begin{enumerate}
    \item The $j$-th feature's nonlinear basis matrix $H_j := [h_{jk}(x_{ij})]_{i=1, k=1}^{n, d_j}$ is orthogonal to itself and orthogonal to $j$-th feature's linear component $\bm{x}_j$.
    \item The $j$-th feature's smoothing matrix $S_j := [\int h^{''}_{jk_1}(x)h^{''}_{jk_2}(x)\mathrm{d}x]_{k_1=1,k_2=1}^{d_j,d_j}$ for nonlinear basis equals to identity matrix $I_{d_j}$.
\end{enumerate}
and the standardization procedure won't change the functional space the basis functions represent: $\{f_j: f_j(x) = \beta_{j0}x + \sum_{k=1}^{d_j}\beta_{jk}h_{jk}(x)\}$ stays the same before and after standardization.

In the following, we denote $h_{jk}(\bm{x}_j) = (h_{jk}(x_{1j}), \cdots, h_{jk}(x_{nj}))^T$ as the $k$-th column of basis matrix $H_j := [h_{jk}(x_{ij})]_{i=1, k=1}^{n, d_j}$. The standardization procedure is listed as follows:
\begin{procedure}
\caption{Nonlinear Basis Standardization () for $f_j(x) = \beta_{j0}x + \sum_{k=1}^{d_j}\beta_{jk}h_{jk}(x)$}
Input centered $\bm{x}_j$ and centered nonlinear basis $\{h_{jk}\}_{k=1}^{d_j}$\;
    Calculate $H_j = [h_{jk}(x_{ij})]_{i=1, k=1}^{n, d_j}$\;
    \For{$k$ in $1,\dots, d_j$}{
    Update basis $h_{jk}(x) = h_{jk}(x) - \frac{<\bm{x}_j, h_{jk}(\bm{x}_j)>}{\|\bm{x}_j\|_2^2} x$\;
    }
    Calculate $S_j = [\int h^{''}_{jk_1}(x)h^{''}_{jk_2}(x)\mathrm{d}x]_{k_1=1,k_2=1}^{d_j,d_j}$\;
    Update basis $(h_{j1}, \cdots, h_{jd_j})^T = S_j^{-\frac{1}{2}}(h_{j1}, \cdots, h_{jd_j})^T$\;
    Recalculate $H_j = [h_{jk}(x_{ij})]_{i=1, k=1}^{n, d_j}$\;
    Eigenvalue decomposition on $H_j^TH_j = U_jV_jU_j^T$\;
    Update basis $(h_{j1}, \cdots, h_{jd_j})^T = U_j^T(h_{j1}, \cdots, h_{jd_j})^T$\;
\end{procedure}

In the following, we provide the proof that after standardization, the nonlinear basis $\{h_{jk}\}_{k=1}^{d_j}$ satisfies those two properties listed above and standardization procedure won't change functional space $\{f_j: f_j(x) = \beta_{j0}x + \sum_{k=1}^{d_j}\beta_{jk}h_{jk}(x)\}$. We summarize it as a proposition as follows:
\begin{prop}
    Standardization procedure 1 won't change functional space $\{f_j: f_j(x) = \beta_{j0}x + \sum_{k=1}^{d_j}\beta_{jk}h_{jk}(x)\}$, and after standardization we have:
    \begin{itemize}
        \item The $j$-th feature's nonlinear basis matrix $H_j := [h_{jk}(x_{ij})]_{i=1, k=1}^{n, d_j}$ is orthogonal to itself and orthogonal to $j$-th feature's linear component $\bm{x}_j$.
        \item The $j$-th feature's smoothing matrix $S_j := [\int h^{''}_{jk_1}(x)h^{''}_{jk_2}(x)\mathrm{d}x]_{k_1=1,k_2=1}^{d_j,d_j}$ for nonlinear basis equals to identity matrix $I_{d_j}$.
    \end{itemize}
\end{prop}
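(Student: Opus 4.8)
The plan is to verify the three claims in sequence, tracking what each step of Procedure~1 accomplishes while confirming that the represented functional space is preserved throughout. I would first observe that the space $\{f_j: f_j(x) = \beta_{j0}x + \sum_{k=1}^{d_j}\beta_{jk}h_{jk}(x)\}$ is exactly the linear span of the linear term $x$ together with the nonlinear basis functions $\{h_{jk}\}_{k=1}^{d_j}$. Each update in the procedure replaces the nonlinear basis by an invertible linear combination of the old basis functions, possibly with a multiple of $x$ added in. Therefore I would argue that \emph{functional-space invariance} reduces to checking that every update is a bijective affine transformation of $\mathrm{span}\{x, h_{j1}, \dots, h_{jd_j}\}$ onto itself: the Gram--Schmidt-style projection step adds a multiple of $x$ to each $h_{jk}$ (staying in the span), while the $S_j^{-1/2}$ rescaling and the orthogonal rotation $U_j^T$ are invertible maps on the nonlinear block. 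Since each operation is invertible and keeps $x$ fixed, the overall span of $\{x, h_{j1}, \dots, h_{jd_j}\}$ is unchanged, hence so is the functional space.

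Next I would establish orthogonality of $H_j$ to $\bm{x}_j$. The projection step sets $h_{jk}(x) \leftarrow h_{jk}(x) - \frac{\langle \bm{x}_j, h_{jk}(\bm{x}_j)\rangle}{\|\bm{x}_j\|_2^2}\,x$, which is precisely the residual of regressing the $k$-th basis column onto $\bm{x}_j$; hence after this step each evaluated column satisfies $\bm{x}_j^T h_{jk}(\bm{x}_j) = 0$. The key point to check is that this orthogonality to $\bm{x}_j$ \emph{survives} the later transformations: both the $S_j^{-1/2}$ rescaling and the rotation $U_j^T$ act as linear combinations of the columns of $H_j$, and any linear combination of vectors each orthogonal to $\bm{x}_j$ remains orthogonal to $\bm{x}_j$. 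So $\bm{x}_j^T H_j = \mathbf{0}$ at the end, as required.

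For the smoothing-matrix claim $S_j = I_{d_j}$, I would track $S_j$ through the rescaling step. After forming $S_j = [\int h''_{jk_1}h''_{jk_2}]_{k_1,k_2}$, the update $(h_{j1},\dots,h_{jd_j})^T \leftarrow S_j^{-1/2}(h_{j1},\dots,h_{jd_j})^T$ transforms the second-derivative Gram matrix by congruence: the new smoothing matrix becomes $S_j^{-1/2} S_j S_j^{-1/2} = I_{d_j}$, using that integration of products of second derivatives is bilinear in the basis functions. I would then note that the final rotation $U_j^T$ is orthogonal, so it leaves the (now identity) smoothing matrix invariant, since $U_j^T I_{d_j} U_j = I_{d_j}$. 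The only subtlety is that $S_j$ must be symmetric positive definite for $S_j^{-1/2}$ to exist, which holds because the second-derivative inner product is an inner product on the nonlinear basis (assuming the basis functions are linearly independent modulo linear functions); I would state this as the standing assumption.

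Finally, self-orthogonality $H_j^T H_j$ diagonal comes from the eigendecomposition $H_j^T H_j = U_j V_j U_j^T$ followed by the rotation $(h_{j1},\dots,h_{jd_j})^T \leftarrow U_j^T(h_{j1},\dots,h_{jd_j})^T$, which replaces the evaluated matrix $H_j$ by $H_j U_j$, so the new Gram matrix is $U_j^T H_j^T H_j U_j = U_j^T U_j V_j U_j^T U_j = V_j$, which is diagonal. The expected main obstacle is purely bookkeeping: I must verify that the orthogonality-to-$\bm{x}_j$ property and the $S_j = I$ property established in earlier steps are not destroyed by the subsequent transformations, since the orthogonal rotation $U_j^T$ acts \emph{after} the $S_j$-normalization and the $\bm{x}_j$-projection. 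The clean resolution is that $U_j^T$ is orthogonal and acts only within the nonlinear block, so it preserves both earlier properties simultaneously while achieving self-orthogonality, and no single step undoes a previously secured invariant.
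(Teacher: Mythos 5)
Your proposal is correct and follows essentially the same route as the paper's own proof: span-preservation under invertible transformations for functional-space invariance, the projection residual for orthogonality to $\bm{x}_j$ (preserved since later steps only mix columns), the congruence $S_j^{-1/2} S_j S_j^{-1/2} = I_{d_j}$ followed by invariance under the orthogonal rotation, and the eigendecomposition yielding a diagonal Gram matrix $V_j$. Your explicit remark that $S_j$ must be symmetric positive definite for $S_j^{-1/2}$ to exist is a small point of added care that the paper leaves implicit, but it does not change the argument.
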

\begin{proof}
To distinguish different nonlinear basis $\{h_{jk}\}_{k=1}^{d_j}$, basis matrix $H_j := [h_{jk}(x_{ij})]_{i=1, k=1}^{n, d_j}$ and smoothing matrix $S_j := [\int h^{''}_{jk_1}(x)h^{''}_{jk_2}(x)\mathrm{d}x]_{k_1=1,k_2=1}^{d_j,d_j}$ at each step of update, we rewrite procedure 1 as follows for the convenience of proofing:
\begin{enumerate}
    \item $H_j = [h_{jk}(x_{ij})]_{i=1, k=1}^{n, d_j}$
    \item $\dot{h}_{jk}(x) = h_{jk}(x) - \frac{<\bm{x}_j, h_{jk}(\bm{x}_j)>}{\|\bm{x}_j\|_2^2} x$, $k = 1,\cdots, d_j$
    \item $\dot{S}_j = [\int \dot{h}^{''}_{jk_1}(x)\dot{h}^{''}_{jk_2}(x)\mathrm{d}x]_{k_1=1,k_2=1}^{d_j,d_j}$
    \item $(\ddot{h}_{j1}, \cdots, \ddot{h}_{jd_j})^T = \dot{S}_j^{-\frac{1}{2}}(\dot{h}_{j1}, \cdots, \dot{h}_{jd_j})^T$
    \item $\ddot{H}_j = [\ddot{h}_{jk}(x_{ij})]_{i=1, k=1}^{n, d_j}$
    \item Eigenvalue decomposition $\ddot{H}_j^T\ddot{H}_j = U_jV_jU_j^T$
    \item $(\tilde{h}_{j1}, \cdots, \tilde{h}_{jd_j})^T = U_j^T(\ddot{h}_{j1}, \cdots, \ddot{h}_{jd_j})^T$
\end{enumerate}

In this representation, $\{h_{jk}\}_{k=1}^{d_j}$ is the nonlinear basis before standardization, and $\{\tilde{h}_{jk}\}_{k=1}^{d_j}$ is the nonlinear basis after standardization. 

Denote after standardization basis matrix as $\tilde{H}_j := [\tilde{h}_{jk}(x_{ij})]_{i=1, k=1}^{n, d_j}$ and after standardization smoothness matrix as $\tilde{S}_j := [\int \tilde{h}^{''}_{jk_1}(x)\tilde{h}^{''}_{jk_2}(x)\mathrm{d}x]_{k_1=1,k_2=1}^{d_j,d_j}$, suffice to prove:
\begin{enumerate}[label=(\roman*)]
    \item $\{f_j: f_j(x) = \beta_{j0}x + \sum_{k=1}^{d_j}\beta_{jk}h_{jk}(x)\} = \{f_j: f_j(x) = \beta_{j0}x + \sum_{k=1}^{d_j}\beta_{jk}\tilde{h}_{jk}(x)\}$.
    \item $\bm{x}_j^T\tilde{H}_j = \mathbf{0}$.
    \item $\tilde{H}_j^T\tilde{H}_j$ is a diagonal matrix.
    \item $\tilde{S}_j = I_{d_j}$.
\end{enumerate}

To show (i), since $\dot{h}_{jk}(x) = h_{jk}(x) - \frac{<\bm{x}_j, h_{jk}(\bm{x}_j)>}{\|\bm{x}_j\|_2^2} x$, $k = 1,\cdots, d_j$, we have
\begin{equation}
\label{eq:pro_proof_1}
    Span(x, \dot{h}_{j1}(x), \cdots, \dot{h}_{jd_j}(x)) = Span(x, h_{j1}(x), \cdots, h_{jd_j}(x))
\end{equation}
And since $(\tilde{h}_{j1}, \cdots, \tilde{h}_{jd_j})^T = U_j^T\dot{S}_j^{-\frac{1}{2}}(\dot{h}_{j1}, \cdots, \dot{h}_{jd_j})^T$, we have
\begin{equation}
\label{eq:pro_proof_2}
    Span(x, \tilde{h}_{j1}(x), \cdots, \tilde{h}_{jd_j}(x)) = Span(x, \dot{h}_{j1}(x), \cdots, \dot{h}_{jd_j}(x))
\end{equation}
Combine Equation \eqref{eq:pro_proof_1} and \eqref{eq:pro_proof_2} we get (i).

To show (ii), since $\dot{h}_{jk}(x) = h_{jk}(x) - \frac{<\bm{x}_j, h_{jk}(\bm{x}_j)>}{\|\bm{x}_j\|_2^2} x$, $k = 1,\cdots, d_j$, we have
$$\dot{h}_{jk}(\bm{x}_j) = h_{jk}(\bm{x}_j) - \frac{<\bm{x}_j, h_{jk}(\bm{x}_j)>}
{\|\bm{x}_j\|_2^2} \bm{x}_j, k = 1,\cdots, d_j$$
Here $\dot{h}_{jk}(\bm{x}_j)$ is the $k$-th column of $\dot{H}_j := [\dot{h}_{jk}(x_{ij})]_{i=1, k=1}^{n, d_j}$ and $h_{jk}(\bm{x}_j)$ is the $k$-th column of $H_j$, thus
$$<\dot{h}_{jk}(\bm{x}_j), \bm{x}_j> = <h_{jk}(\bm{x}_j), \bm{x}_j> - \frac{<h_{jk}(\bm{x}_j), \bm{x}_j>}{\|\bm{x}_j\|_2^2}\cdot<\bm{x}_j,\bm{x}_j> = 0$$
This implies that $\bm{x}_j^T\dot{H}_j = \mathbf{0}$.
Since $\tilde{H}_j = \ddot{H}_jU_j = \dot{H}_j\dot{S}_j^{-\frac{1}{2}}U_j$, we have (ii) proved as follows:
$$\bm{x}_j^T\tilde{H}_j = \bm{x}_j^T\dot{H}_j\cdot{S}_j^{-\frac{1}{2}}U_j = \mathbf{0}$$

To show (iii), since $\tilde{H}_j = \ddot{H}_jU_j$, we have
$$\tilde{H}_j^T\tilde{H}_j = U_j^T\ddot{H}_j^T\ddot{H}_jU_j = U_j^TU_jV_jU_j^TU_j = V_j$$
Since $V_j$ is the diagonal matrix from eigenvalue decomposition, (iii) proved.

To show (iv), denote functional vector $\tilde{\bm{h}}^{''}_j(x) := (\tilde{h}^{''}_{j1}(x), \cdots, \tilde{h}^{''}_{jd_j}(x))^T$ and\\$\dot{\bm{h}}^{''}_j(x) := (\dot{h}^{''}_{j1}(x), \cdots, \dot{h}^{''}_{jd_j}(x))^T$, since $\tilde{\bm{h}}^{''}_j(x) =  U_j^T\dot{S}_j^{-\frac{1}{2}}\dot{\bm{h}}^{''}_j(x)$, we have
\begin{align*}
    \tilde{S}_j &:= [\int \tilde{h}^{''}_{jk_1}(x)\tilde{h}^{''}_{jk_2}(x)\mathrm{d}x]_{k_1=1,k_2=1}^{d_j,d_j} = \int \tilde{\bm{h}}^{''}_j(x) \tilde{\bm{h}}^{''}_j(x)^T \mathrm{d}x \\
    &= U_j^T\dot{S}_j^{-\frac{1}{2}}\int \dot{\bm{h}}^{''}_j(x) \dot{\bm{h}}^{''}_j(x)^T \mathrm{d}x \dot{S}_j^{-\frac{1}{2}}U_j \\
    &= U_j^T\dot{S}_j^{-\frac{1}{2}}[\int \dot{h}^{''}_{jk_1}(x)\dot{h}^{''}_{jk_2}(x)\mathrm{d}x]_{k_1=1,k_2=1}^{d_j,d_j} \dot{S}_j^{-\frac{1}{2}}U_j\\
    &= U_j^T\dot{S}_j^{-\frac{1}{2}} \dot{S}_j \dot{S}_j^{-\frac{1}{2}}U_j = I_{d_j}
\end{align*}
Thus (iv) proved.
\end{proof}

\section{}
\label{sec:appendix_parameter_reduction}


In this part of appendix we prove proposition \ref{prop:reduce_to_univariate_problem}. For readers' convenience, we copy the one block problem \eqref{eq:one_block_problem} here:
\begin{align}
    \mathcal{L}_j &= - \E_{\bm{\beta}_j \sim \N(\bm{\mu}_j, \Phi_j)} [\log \N(\mathbf{y}_{(-j)}|Z_j\bm{\beta}_j, \alpha I_n)] +  \KLD{\N(\bm{\mu}_j, \Phi_j)}{\N(\mathbf{0}, r_j^2 I_{d_j})} \label{eq:one_block_problem_appendix}
    \\
    &\overset{c}{=} \frac{1}{\alpha}\{\|\mathbf{y}_{(-j)} - Z_j\bm{\mu}_j\|_2^2 + tr(\Phi_jZ_j^TZ_j)\} + d_j\log r_j^2-\log\det\Phi_j + \frac{\|\bm{\mu}_j\|_2^2 + tr(\Phi_j)}{r_j^2} \nonumber 
\end{align}
and copy expression of univariate objective function $G_{\alpha, \bm{\eta}_j, V_j}(r^2)$ in \eqref{eq:univariate_objective} that depends on tuning parameter $\alpha$, j-th gram matrix $V_j := Z_j^TZ_j$ and $\bm{\eta}_j := Z_j^T\mathbf{y}_{(-j)}$ here:
\begin{equation}
\label{eq:univariate_objective_appendix}
    G_{\alpha, \bm{\eta}_j, V_j}(r^2):=\sum_{k=1}^{d_j} \{\alpha \log (v_{jk} r^2 + \alpha) - \frac{\eta_{jk}^2r^2}{v_{jk}r^2 + \alpha}\}.
\end{equation}
Here $v_{jk}$ is the k-th diagonal element of $V_j$ and $\eta_{jk}$ is the k-th element of vector $\bm{\eta}_j$.

Below, we copy Proposition \ref{prop:reduce_to_univariate_problem} here and provide proof in the following:
\begin{prop1}
\label{prop:reduce_to_univariate_problem_appendix}
$(\hat{r}_j^2, \hat{\bm{\mu}}_j, \hat{\Phi}_j)$ minimize objective \eqref{eq:one_block_problem_appendix} if only if
\begin{align}
    \hat{r}_j^2 &= \underset{r^2 \geq 0}{\argmin} \quad G_{\alpha, \bm{\eta}_j, V_j}(r^2) 
    \nonumber \\
    \hat{\bm{\mu}}_j &= \hat{r}_j^2 \cdot (\hat{r}_j^2Z_j^TZ_j + \alpha I_{d_j})^{-1}Z_j^T\mathbf{y}_{(-j)} \label{eq:mu_expression_appendix_proof}\\
    \hat{\Phi}_j &= \hat{r}_j^2 \cdot \alpha (\hat{r}_j^2Z^T_jZ_j + \alpha I_{d_j})^{-1}\label{eq:phi_expression_appendix_proof}
\end{align}
\end{prop1}
\begin{proof}
We start with showing that for any fixed $r_j^2 \geq 0$, $\mathcal{L}_j$ in \eqref{eq:one_block_problem_appendix} is marginally optimized at $\hat{\bm{\mu}}$ and $\hat{\Phi}_j$ given in \eqref{eq:mu_expression_appendix_proof} and \eqref{eq:phi_expression_appendix_proof}.

Whenever $r_j^2 = 0$, it's trivial to see that both $\bm{\mu}_j$ and $\Phi_j$ must be exact zero, otherwise the KL divergence part is infinity, thus in this case \eqref{eq:mu_expression_appendix_proof} and \eqref{eq:phi_expression_appendix_proof} are true. 

Whenever $r_j^2 \neq 0$, denote $p(\mathbf{y}_{(-j)}|\bm{\beta}_j):= N(\mathbf{y}_{(-j)}|Z_j\bm{\beta}_j, \alpha I_n)$, and $p(\bm{\beta}_j|\mathbf{y}_{(-j)})$ to be the posterior distribution of $\bm{\beta}_j$ given prior $p(\bm{\beta}_j) = N(\bm{\beta}_j|\mathbf{0}, r_j^2I_{d_j})$, objective \eqref{eq:one_block_problem_appendix} can be written in the following form,
\begin{align*}
    \mathcal{L}_j &\propto \E_{\bm{\beta}_j \sim N(\bm{\mu}_j, \Phi_j)} (\log \frac{N(\bm{\mu}_j, \Phi_j)}{p(\mathbf{y}_{(-j)}|\bm{\beta}_j)p(\bm{\beta}_j)})\\
    &\propto \KLD{N(\bm{\mu}_j, \Phi_j)}{p(\bm{\beta}_j|\mathbf{y}_{(-j)})}\\
    & = \KLD{N(\bm{\mu}_j, \Phi_j)}{N((Z_j^TZ_j + \frac{\alpha}{r^2_j}I_{d_j})^{-1}Z_j^T\mathbf{y}_{(-j)}, (\frac{Z^T_jZ_j}{\alpha} + \frac{I_{d_j}}{r_j^2})^{-1})}
\end{align*}
Since KL Divergence is minimized when two distribution is identical (almost surely), thus when $r_j^2$ is fixed at non-zero value, we have optimal
\begin{align*}
    \hat{\bm{\mu}}_j(r_j^2) &= (Z_j^TZ_j + \frac{\alpha}{r^2_j}I_{d_j})^{-1}Z_j^T\mathbf{y}_{(-j)}\\
    \hat{\Phi}_j(r_j^2) &= \Big (\frac{Z^T_jZ_j}{\alpha} + \frac{I_{d_j}}{r_j^2} \Big )^{-1}
\end{align*}
Rearrange terms, we get expression of \eqref{eq:mu_expression_appendix_proof} and \eqref{eq:phi_expression_appendix_proof}.

Next, we complete the proof by showing that plugging in the expression of optimal $\hat{\bm{\mu}}_j(r_j^2)$ and $\hat{\Phi}_j(r_j^2)$ in \eqref{eq:mu_expression_appendix_proof} and \eqref{eq:phi_expression_appendix_proof} back to $\mathcal{L}_j$ in \eqref{eq:one_block_problem_appendix}, it reduced to a problem
of finding optimal $\hat{r}_j^2$ of the univariate function $G_{\alpha, \bm{\eta}_j, V_j}(r^2_j)$ with expression \eqref{eq:univariate_objective_appendix}.

We start with expression of $\mathcal{L}_j$:
\begin{align}
    \mathcal{L}_j &\overset{c}{=} \frac{1}{\alpha}\{\|\mathbf{y}_{(-j)} - Z_j\bm{\mu}_j\|_2^2 + tr(\Phi_jZ_j^TZ_j)\} + d_j\log r_j^2-\log\det\Phi_j + \frac{\|\bm{\mu}_j\|_2^2 + tr(\Phi_j)}{r_j^2} \nonumber \\
    &\propto \|\mathbf{y}_{(-j)} - Z_j\bm{\mu}_j\|_2^2 + tr(\Phi_jZ_j^TZ_j) + \alpha\cdot\{d_j\log r_j^2-\log\det\Phi_j + \frac{\|\bm{\mu}_j\|_2^2 + tr(\Phi_j)}{r_j^2}\} \nonumber \\
    &\overset{c}{=} -2\mathbf{y}_{(-j)}^TZ_j\bm{\mu}_j +  \bm{\mu}_j^TZ_j^TZ_j\bm{\mu}_j + tr(\Phi_jZ_j^TZ_j) + \alpha\cdot\{d_j\log r_j^2-\log\det\Phi_j + \frac{\|\bm{\mu}_j\|_2^2 + tr(\Phi_j)}{r_j^2}\} \nonumber \\
    &= -2\bm{\eta}_j^T\bm{\mu}_j + \bm{\mu}_j^TV_j\bm{\mu}_j + tr(\Phi_jV_j) + \alpha\cdot\{d_j\log r_j^2-\log\det\Phi_j + \frac{\|\bm{\mu}_j\|_2^2 + tr(\Phi_j)}{r_j^2}\}\label{eq:plugin_derivation_1}
\end{align}
Next we plugin optimal $\hat{\bm{\mu}}(r_j^2)$ and $\hat{\Phi}_j(r_j^2)$ from \eqref{eq:mu_expression_appendix_proof} and \eqref{eq:phi_expression_appendix_proof}. Notice that:
\begin{align}
    \hat{\bm{\mu}}_j(r_j^2) &= r_j^2 \cdot (r_j^2Z_j^TZ_j + \alpha I_{d_j})^{-1}Z_j^T\mathbf{y}_{(-j)} = vector\{\frac{\eta_{jk}\hat{r}_j^2}{v_{jk}\hat{r}_j^2 + \alpha}\}_{k=1}^{d_{j}}\label{eq:mu_expression_spec_appendix}\\
    \hat{\Phi}_j(r_j^2) &= r_j^2 \cdot \alpha (r_j^2Z^T_jZ_j + \alpha I_{d_j})^{-1} =  diag\{\frac{\alpha \hat{r}_j^2}{v_{jk}\hat{r}_j^2 + \alpha}\}_{k=1}^{d_j}\label{eq:phi_expression_spec_appendix}
\end{align}
Plugin \eqref{eq:mu_expression_spec_appendix} and \eqref{eq:phi_expression_spec_appendix} to $\mathcal{L}_j$ expression \eqref{eq:plugin_derivation_1}, we get:
\begin{align*}
    \mathcal{L}_j &\overset{c}{=} \sum_{k=1}^{d_j}\{-2\frac{\eta_{jk}^2r_j^2}{v_{jk}r_j^2 + \alpha} + \frac{\eta_{jk}^2 v_{jk}r^4_j}{(v_{jk}r_j^2 + \alpha)^2} + \frac{\alpha v_{jk}r_j^2}{v_{jk}r_j^2 + \alpha} + \alpha\log(v_{jk}r_j^2 + \alpha) + \frac{\alpha \eta_{jk}^2r_j^2}{(v_{jk}r_j^2 + \alpha)^2} + \frac{\alpha^2}{v_{jk}r_j^2 + \alpha}\}\\
    &=\sum_{k=1}^{d_j}\{\frac{(\alpha v_{jk}^2 - \eta_{jk}^2v_{jk})r_j^4 + (2\alpha^2v_{jk} - \alpha \eta_{jk}^2)r_j^2 + \alpha^3}{(v_{jk}r_j^2 + \alpha)^2} + \alpha\log(v_{jk}r_j^2 + \alpha)\} \\
    &= \sum_{k=1}^{d_j}\{\frac{[(\alpha v_{jk} - \eta_{jk}^2)r_j^2 + \alpha^2]\cdot(v_{jk}r_j^2 + \alpha)}{(v_{jk}r_j^2 + \alpha)^2} + \alpha\log(v_{jk}r_j^2 + \alpha)\}\\
    &= \sum_{k=1}^{d_j}\{\frac{(\alpha v_{jk} - \eta_{jk}^2)r_j^2 + \alpha^2}{v_{jk}r_j^2 + \alpha} + \alpha\log(v_{jk}r_j^2 + \alpha)\}\\
    &\overset{c}{=} G_{\alpha, \bm{\eta}_j, V_j}(r^2_j)
\end{align*}
\end{proof}

\section{}
\label{sec:appendix_univariate_solution}
In this part of the appendix, we provide proof for Proposition \ref{prop:univariate_problem_interval} of our paper to show that the optimal solution of univariate problem \eqref{eq:univariate_problem} must be within closed interval $[l_j, u_j]$. For readers' convenience, we copy the target univariate problem \eqref{eq:univariate_problem} here:
\begin{equation}
\label{eq:univariate_problem_appendix_C}
    \hat{r}_j^2 = \underset{r^2 \geq 0}{\argmin} \quad G_{\alpha, \bm{\eta}_j, V_j}(r^2)
\end{equation}
The univariate function $G_{\alpha, \bm{\eta}_j, V_j}(r^2)$ is a function depends on tuning parameter $\alpha$, j-th gram matrix $V_j = Z_j^TZ_j$ and $\bm{\eta}_j = Z_j^T\mathbf{y}_{(-j)}$. $G_{\alpha, \bm{\eta}_j, V_j}(r^2)$  takes the following form:
\begin{equation}
\label{eq:univariate_objective_appendix_C}
    G_{\alpha, \bm{\eta}_j, V_j}(r^2):=\sum_{k=1}^{d_j} \{\alpha \log (v_{jk} r^2 + \alpha) - \frac{\eta_{jk}^2r^2}{v_{jk}r^2 + \alpha}\}
\end{equation}
Here $v_{jk}$ is the k-th diagonal element of $V_j$ and $\eta_{jk}$ is the k-th element of vector $\bm{\eta}_j$.
We copy the expression of closed interval $[l_j, u_j]$ here:
\begin{align}
    l_j &:= \underset{k = 1, \cdots, d_j}{\min}(\frac{\eta^2_{jk} - \alpha v_{jk}}{v_{jk}^2})_{+} \label{eq:interval_l_appendix}\\
    u_j &:= \underset{k = 1, \cdots, d_j}{\max}(\frac{\eta^2_{jk} - \alpha v_{jk}}{v_{jk}^2})_{+} \label{eq:interval_u_appendix}
\end{align}
Below, we copy proposition \ref{prop:univariate_problem_interval} here and provide proof in the following:
\begin{prop2}
\label{prop:univariate_problem_interval_appendix}
The optimal solution for univariate problem \eqref{eq:univariate_problem} exists and must be within closed interval $[l_j, u_j]$.
\end{prop2}
\begin{proof}
To start we denote:
$$
    g_{\alpha, \eta_{jk}, v_{jk}}(r^2):= \alpha \log (v_{jk} r^2 + \alpha) - \frac{\eta_{jk}^2r^2}{v_{jk}r^2 + \alpha}.
$$
Notice that $G_{\alpha, \bm{\eta}_j, V_j}(r^2):=\sum_{k=1}^{d_j} g_{\alpha, \eta_{jk}, v_{jk}}(r^2)$, taking derivative on function $g_{\alpha, \eta_{jk}, v_{jk}}(r^2)$, we have
$$g_{\alpha, \eta_{jk}, v_{jk}}^{'}(r^2) = \alpha\cdot\frac{v_{jk}^2r^2 - (\eta_{jk}^2 - \alpha v_{jk})}{(v_{jk}r^2+ \alpha)^2 }$$
so $g_{\alpha, \eta_{jk}, v_{jk}}(r^2)$ monotone decrease on $[0, (\frac{\eta_{jk}^2 - \alpha v_{jk}}{v_{jk}^2})_{+}]$ and monotone increase on $[(\frac{\eta_{jk}^2 - \alpha v_{jk}}{v_{jk}^2})_{+}, +\infty)$, which implies that $\sum_{k=1}^{d_j} g_{\alpha, \eta_{jk}, v_{jk}}(r^2)$ monotone decrease on $[0, \underset{k = 1, \cdots, d_j}{\min}(\frac{\eta^2_{jk} - \alpha v_{jk}}{v_{jk}^2})_{+}]$ and monotone increase on $[\underset{k = 1, \cdots, d_j}{\max}(\frac{\eta^2_{jk} - \alpha v_{jk}}{v_{jk}^2})_{+}, +\infty)$. Thus optimal solution must exist and must be within closed interval $[l_j, u_j]$. 
\end{proof}

\section{}
\label{sec:appendix_alpha_implementation}
\subsection{Cross Validation to Choose $\alpha$}
The value of the hyperparameter $\alpha$ plays a crucial role in achieving sparsity in the estimated coefficients. As $\alpha$ increases, the algorithm tends to produce more sparsity in the estimated coefficients. In fact, whenever $\alpha$ surpasses a threshold defined as $\underset{j = 1, \cdots, 2p}{\max}\{{\underset{k = 1, \cdots, d_j}{\max}\frac{(Z_j^T\mathbf{y})_k^2}{v_{jk}}}\}$, all elements of $\hat{\bm{\mu}}_j$ will remain at zero if initialized as such in our algorithm. This indicates that $\alpha$ plays a role similar to the penalty parameter $\lambda$ in the R package \texttt{glmnet} \citep{glmnet}. Larger values of $\alpha$ will include fewer features, while smaller values of $\alpha$ tend to include more features. Keeping that in mind, in order to select the hyperparameter $\alpha$, our methodology borrows ideas used in the R package \texttt{glmnet} \citep{glmnet}, with some adaptations to suit our practical needs. For a given dataset, we predefine a sequence of candidate $\alpha$ values and perform cross-validation on this sequence to identify the optimal $\alpha$.

To speed up computation, we incorporate the warm start concept from \cite{glmnet} to train a sequence of models for the candidate $\alpha$ values. Except for the first $\alpha$, we use the converged values of $\{\hat{\bm{\mu}}_j\}_1^{2p}$  from the previous candidate $\alpha$ as the initialization point for training the model with the current candidate $\alpha$. While \cite{glmnet} suggests training the sequence in descending order of hyperparameters, we find that this approach often leads to suboptimal results for smaller $\alpha$ values in our algorithm. Therefore, we opt for training the sequence in ascending order of $\alpha$, which works nearly as well as training for every $\alpha$ without the warm start.

When plotting the cross-validation mean squared error against the candidate $\alpha$ sequence, we typically observe a U-shaped curve as long as the candidate $\alpha$ values are well-distributed across the range $(0, \underset{j = 1, \cdots, 2p}{\max}\{{\underset{k = 1, \cdots, d_j}{\max}\frac{(Z_j^T\mathbf{y})_k^2}{v_{jk}}}\}]$. Similar to \texttt{glmnet}, we also observe in our algorithm that the $\alpha$ with the smallest cross-validation mean squared error often tends to overfit the data by including many irrelevant features. To address this problem, we borrow the concept of `$\lambda_{1se}$' from the R package \texttt{glmnet} (\cite{glmnet}). In our algorithm, a suitable heuristic value for the hyperparameter $\alpha$ is the maximum $\alpha$ value that falls within $0.15$ times the standard deviation of the minimum cross-validation mean squared error. We denote this $\alpha$ as $\alpha_{0.15se}$.

\subsection{Other Implementation Details}
\begin{description}
\item[Initialization] In coordinate descent algorithms, a proper initialization can lead to improved results. In the context of linear models, \cite{ray2021variational} initialized their algorithm with estimates from ridge regression, while \cite{yang2020variational} used lasso estimates. In our additive model setup, we experimented with ridge and group lasso initializations but did not observe significant improvements in our algorithm's performance. As a practical choice, we simply initialize our algorithm with all $\hat{\bm{\mu}}_j = \mathbf{0}$.
\item[Convergence Rule] Similar to R package \texttt{glmnet}, we halt our algorithm when the difference in residual sum of squares between two iterations is less than $0.1^{6} \cdot \|\mathbf{y} - \bar{\mathbf{y}}\|_2^2$.
\item[Active Set Strategy] o expedite computation, especially in high-dimensional scenarios with sparse true signals, we employ an active set strategy inspired by \cite{glmnet}. This approach involves running one iteration of coordinate descent, focusing on non-zero dimensions until convergence, and then including all features for a final iteration to check for convergence.
\item[Categorical Features] In real-world applications, data often contain a mix of numerical and categorical features. For categorical features, we apply one-hot (dummy) encoding to all levels, center them, and append each level as one $Z_j$ with $d_j = 1$ to our algorithm for coefficient estimation and feature selection.
\end{description}

\section{}
\label{sec:appendix_experiment_setup}
In this section, we list the detailed setup for Experiment 1 and Experiment 2 in Section \ref{sec:performance_comparison}. In the following $\textbf{I}$ is identity matrix, $\textbf{J}$ is all 1 matrix and
$$
    \phi_1(x) = 10e^{-4.6x^2}, \quad
    \phi_2(x) = 4\cos(1.7x), \quad
    \phi_3(x) = 5(x + 1.3)^2, \quad
    \phi_4(x) = 6(x + 5).
$$

\textbf{Experiment 1}:
\begin{itemize}
    \item Case 1: $n = 500$, $p = 10$, $\sigma^2 = 1$, $(f_2, f_5, f_7, f_8) = (\phi_1, \phi_2, \phi_3, \phi_4)$ and all other $f_j$ s are exact zero. Each entry of the design matrix is \textit{i.i.d},  generated from $\mathcal{U}(-1, 1)$ distribution.
    \item Case 2: $n = 800$, $p = 15$, $\sigma^2 = 4$, $(f_1, f_2, f_3, f_4, f_7, f_{11}, f_{12}, f_{13}) = (\phi_1, -\phi_1,\phi_2, -\phi_2,\phi_3, -\phi_3,$ $ \phi_4, -\phi_4)$ and  all other $f_j$ s are exact zero. Each entry of the design matrix is \textit{i.i.d},  generated from $\mathcal{U}(-1, 1)$ distribution.
    \item Case 3: $n = 500$, $p = 150$, $\sigma^2 = 1$, $(f_1, f_4, f_6) = (\phi_2, \phi_3, \phi_4)$ and all other $f_j$ s are exact zero. Each entry of the design matrix is \textit{i.i.d},  generated from $\mathcal{U}(-1, 1)$ distribution.
    \item Case 4: $n = 1000$, $p = 1000$, $\sigma^2 = 1$, $(f_{10}, f_{13}, f_{18}, f_{120}) = (\phi_1, \phi_2, \phi_3, \phi_4)$ and all other $f_j$ s are exact zero. Each entry of the design matrix is \textit{i.i.d},  generated from $\mathcal{U}(-1, 1)$ distribution.
    \item Case 5: $n = 500$, $p = 30$, $\sigma^2 = 1$, $(f_{1}, f_{3}, f_{4}, f_{5}, f_{12}, f_{20}, f_{21}, f_{23}, f_{24}, f_{25}, f_{26}, f_{28}) = (\phi_1, \phi_2, \phi_3,$ $\phi_4, \phi_1, \phi_2, \phi_3, \phi_4, \phi_1, \phi_2, \phi_3, \phi_4)$ and all other $f_j$ s are exact zero. Each row of the design matrix is \textit{i.i.d}, generated from a multivariate normal distribution with mean zero and covariance matrix $\Sigma = 0.7 \textbf{I} + 0.3 \textbf{J}$.
    \item Case 6: Same setup with Case 5 with new covariance matrix $\Sigma = 0.3 \textbf{I} + 0.7\textbf{J}$.
\end{itemize}

\textbf{Experiment 2}:
\begin{itemize}
    \item Case 1: $n = 600$, $p = 18$, $\sigma^2 = 1$, $(f_2, f_3, f_5, f_6, f_8, f_{10})$ are assigned with nonlinear functions $(\phi_1, 2\phi_1, \phi_2, 2\phi_2, \phi_3, 2\phi_3)$, $(f_{11}, f_{12}, f_{14}, f_{15}, f_{17}, f_{18})$ are assigned with linear functions $(\phi_4, 2\phi_4, 3\phi_4, -\phi_4, -2\phi_4, -3\phi_4)$ , and all other $f_j$ s are exact zero. Each entry of the design matrix is \textit{i.i.d},  generated from $\mathcal{U}(-1, 1)$ distribution.
    
    \item Case 2: $n = 2000$, $p = 100$, $\sigma^2 = 1$, $\{f_1, \dots, f_{25} \}$ are assigned with nonlinear function $\phi_1$ with multiplier evenly spaced from $1$ to $5$, $\{f_{26}, \dots, f_{50} \}$ are assigned with linear function $\phi_4$ with multiplier evenly spaced from $1$ to $5$, and $f_j$ that is not listed here are exact zero. Each entry of the design matrix is \textit{i.i.d},  generated from $\mathcal{U}(-1, 1)$ distribution.
    
    \item Case 3: $n = 1000$, $p = 1200$, $\sigma^2 = 1$, $(f_{12},f_{123}, f_{810})$ are assigned with nonlinear functions $(\phi_1, \phi_2, \phi_3)$, $(f_{90}, f_{500}, f_{811})$ are assigned with linear function $-\phi_4$, and all other $f_j$ s are exact zero. Each entry of the design matrix is \textit{i.i.d},  generated from $\mathcal{U}(-1, 1)$ distribution.
    
    \item Case 4: $n = 600$, $p = 30$, $\sigma^2 = 1$, $\{f_{21}, \dots, f_{25} \}$ are assigned with nonlinear function $\phi_3$, $\{f_{26}, \dots, f_{30}\}$ are assigned with nonlinear function $-\phi_3$, $\{f_{11}, \dots,  f_{15}\}$ are assigned with linear function $\phi_4$, $\{f_{16}, \dots, f_{20}$\} are assigned with linear function $-\phi_4$, and all other $f_j$ s are exact zero. Each row of the design matrix is \textit{i.i.d}, generated from a multivariate normal distribution with mean zero and covariance matrix $\Sigma = 0.7 \textbf{I} + 0.3 \textbf{J}$.
    
    \item Case 5: Same setup as Case 4 but with new covariance matrix $\Sigma = 0.3 \textbf{I} + 0.7\textbf{J}$.
    
\end{itemize}

\end{appendices}
\end{document}